\newtheorem{theorem}{Theorem}
\newtheorem{lemma}[theorem]{Lemma}
\newenvironment{definition}{\begin{trivlist}\item[]{\bf Definition}\ }%
{\end{trivlist}}
\newenvironment{proof}{\begin{trivlist}\item[]{\bf Proof\ }}%
{\end{trivlist}}
\def\lorunary{{\mathop{\bigvee}}}           % Unary logical or
\def\falseBPO#1{{\left(\lorunary{#1}\right)}}
\def\falseBPOpi{{\falseBPO{\overline\pi}}}
\def\falseBPOpisub#1{{\falseBPO{\overline\pi_{#1}}}}
\newcommand{\qed}{\Box}			% end of proof box
\newcommand{\rddots}{{\mathinner{\mkern1mu\raise1pt\vbox{\kern7pt\hbox{.}}%
        \mkern2mu\raise4pt\hbox{.}\mkern2mu\raise7pt\hbox{.}\mkern1mu}}}
\newcommand{\proofdots}{{\ddots\vdots\,\rddots}}
\def\GT{{\mbox{\textup{GT}}}}
\def\GGT{{\mbox{\textup{GGT}}}}
\def\caseiti{{(\kern-1pt{\it i})}}
\def\caseitii{{(\kern-1pt{\it ii})}}
\def\caseitiii{{(\kern-1pt{\it iii})}}
\def\caseitiv{{(\kern-1pt{\it iv})}}
\def\caseitip{{(\kern-1pt{\it i'})}}
\def\caseitiip{{(\kern-1pt{\it ii'})}}
\def\caseitiiip{{(\kern-1pt{\it iii'})}}
\def\caseitivp{{(\kern-1pt{\it iv'})}}
\begin{document}

\title{An Improved Separation of Regular Resolution
from Pool Resolution and Clause Learning \\
{\em \normalsize Preliminary version. Comments appreciated.}}

\author{
Maria Luisa Bonet\thanks{Supported in part by grant TIN2010-20967-C04-02.}\\
\small Lenguajes y Sistemas Inform{\'a}ticos \\
\small Universidad Polit{\'e}cnica de Catalu{\~n}a \\
\small Barcelona, Spain \\
\small \tt bonet@lsi.upc.edu
\and
Sam Buss\thanks{Supported in part by
NSF grants \hbox{DMS-0700533} and \hbox{DMS-1101228}, and by a grant
from the Simons
Foundation (\#208717 to Sam Buss).
The second author thanks the John Templeton Foundation for supporting
his participation in the CRM Infinity Project
at the Centre de Recerca Matem\`atica,
Barcelona, Catalonia, Spain during which
these results were obtained.}\\
\small Department of Mathematics \\
\small University of California, San Diego\\
\small La Jolla, CA 92093-0112, USA\\
\small \tt sbuss@math.ucsd.edu
}

\maketitle

\begin{abstract}
We prove that the graph tautology principles of
Alekhnovich, Johannsen, Pitassi and Urquhart
have polynomial size pool resolution refutations that use
only input lemmas as learned clauses and without degenerate
resolution inferences.  We also prove that these
graph tautology principles can be refuted
by polynomial size DPLL proofs with
clause learning, even when restricted to greedy,
unit-propagating DPLL
search.
\end{abstract}

\section{Introduction}

The problem SAT of deciding the satisfiability of propositional CNF formulas
is of great theoretical and practical interest. Even though it is
NP-complete, industrial instances with hundreds of thousands variables
are routinely solved
by state of the art SAT solvers.  Most of these solvers are
based on the DPLL
procedure extended with clause learning, restarts, variable
selection heuristics, and other techniques.

The basic DPLL procedure
without clause learning is equivalent to tree-like
resolution.  The addition of clause learning makes DPLL considerably
stronger.  In fact, clause learning together with unlimited
restarts is capable of simulating general resolution
proofs~\cite{PipatsrisawatDarwiche:clauselearning}.
However, the
exact power of DPLL with clause learning but without restarts
is unknown.  This question
is interesting not only for theoretical
reasons, but also because of the potential for better understanding
the practical performance of
various refinements of DPLL with clause learning.

Beame, Kautz, and Sabharwal~\cite{BKS:clauselearning} gave
the first
theoretical analysis of DPLL
with clause learning.
Among other things,
they noted that clause learning with restarts simulates
general resolution.  Their construction required the
DPLL algorithm to ignore some contradictions,
but this situation was rectified
by Pipatsrisawat and
Darwiche~\cite{PipatsrisawatDarwiche:clauselearning} who showed
that SAT solvers which do not ignore contradictions can also
simulate resolution.   These techniques were also applied
to learning bounded width clauses by~\cite{AFT:clauseLearning}.

Beame et al.~\cite{BKS:clauselearning} also studied DPLL clause
learning without restarts.  Using a method of ``proof trace extensions'',
they were able to show that DPLL with clause learning
and no restarts is strictly
stronger than any ``natural'' proof system strictly weaker than
resolution.  Here, a {\em natural} proof system is one in which
proofs do not increase in length when variables are restricted
to constants.  The class of natural proof systems is known to
include common proof systems such as tree-like or regular proofs.
The proof trace method involves introducing extraneous variables
and clauses, which have the effect of giving
the clause learning DPLL algorithm more freedom in choosing
decision variables for branching.

There have been two approaches to formalizing DPLL with clause
learning as a static proof system rather
than as a proof search algorithm.  The first is
pool resolution with a degenerate resolution inference, due
originally to Van Gelder~\cite{VanGelder:PoolResolution} and
studied further by Bacchus et al.~\cite{HBPvG:clauselearn}.
Pool resolution requires proofs to have a depth-first regular
traversal similarly to the search space of a DPLL algorithm.
Degenerate resolution allows resolution inferences in which one or both of
the hypotheses may be lacking occurrences of the resolution
literal.  Van Gelder argued that pool resolution with degenerate
resolution inferences simulates a wide range of DPLL algorithms
with clause learning.  He also gave a proof,
based on~\cite{AJPU:regularresolution}, that pool
resolution with degenerate inferences is stronger than
regular resolution, using extraneous
variables similar to proof trace extensions.

The second approach is due to
Buss-Hoffmann-Johannsen~\cite{BHJ:ResTreeLearning}
who introduced a ``partially degenerate'' resolution
rule called w-resolution, and a proof system regWRTI
based on w-resolution and clause learning of
``input lemmas''.  They proved that
regWRTI exactly captures non-greedy DPLL with clause learning.
By ``non-greedy'' is meant that contradictions may need to be
ignored by the DPLL search.

Both \cite{HBPvG:clauselearn} and~\cite{BHJ:ResTreeLearning}
gave improved versions of the proof trace extension method
so that the extraneous variables depend only on the
set of clauses being refuted and not on resolution
refutation of the clauses.
The drawback remains, however, that the proof trace extension
method gives contrived sets of clauses and contrived resolution
refutations.

It remains open whether any of
DPLL with clause learning,
pool resolution (with or without
degenerate inferences),
or the regWRTI proof system can polynomially
simulate general resolution.
One approach to answering
these questions is to try to separate pool resolution (say)
from general resolution.
So far, however, separation results are
known only for the weaker system of regular
resolution, based on work of
Alekhnovitch et al.~\cite{AJPU:regularresolution},
who gave an exponential separation between
regular resolution and general resolution.
Alekhnovitch et al.~\cite{AJPU:regularresolution}
proved this separation for two families
of tautologies, variants of the graph tautologies~$\GT^\prime$ and
the ``Stone'' pebbling tautologies.
Urquhart~\cite{Urquhart:regularresolution} subsequently
gave a related separation.\footnote{%
Huang and Yu~\cite{HuangYu:regularresolution} also gave a
separation of regular resolution and general resolution, but
only for a single set of clauses.
Goerdt~\cite{Goerdt:regularresolution}
gave a quasipolynomial separation of regular resolution and
general resolution.}
In the present paper, we call the tautologies~$\GT^\prime$
the {\em guarded} graph tautologies, and henceforth denote
them $\GGT$ instead of $\GT^\prime$; their definition is given
in Section~\ref{PrelimsSect}.

Thus, an obvious question is whether
pool resolution (say) has polynomial size proofs of
the $\GGT$ tautologies or the Stone tautologies.
The main result of the present paper resolves the first question
by showing that pool resolution does indeed have polynomial
size proofs of the graph tautologies~$\GGT$.
Our proofs apply to the original $\GGT$ principles,
without the use of extraneous variables in the style
of proof trace extensions;
our refutations use only the traditional
resolution rule and do not require
degenerate resolution inferences or w-resolution
inferences.   In addition, we use only learning of input clauses;
thus, our refutations are also regWRTI
proofs (and in fact regRTI proofs) in the terminology
of~\cite{BHJ:ResTreeLearning}.
As a corollary of the characterization of regWRTI
by~\cite{BHJ:ResTreeLearning},
the $\GGT$ principles
have polynomial size refutations that can be found by
a DPLL algorithm with clause learning and without restarts (under
the appropriate variable selection order).

It is still open if there are polynomial size pool resolution
refutations for the Stone principles.  However, it is plausible
that our methods could extend to give such refutations.
It seems more likely that our proof methods could extend to the
pebbling tautologies used by~\cite{Urquhart:regularresolution},
as the hardness of those tautologies is due to the addition
of randomly chosen ``guard'' literals, similarly to the
$\GGT$ tautologies.\footnote{Subsequent to the circulation of
a preliminary version of the present paper, Buss and Johanssen
[in preparation]
have succeeded giving polynomial size regRTI proofs of the
pebbling tautologies of~\cite{Urquhart:regularresolution}.}
A much more ambitious project
would be to show that pool resolution or regWRTI can simulate
general resolution, or that DPLL with clause learning and without
restarts can simulate general resolution.  It is far from clear
that this is true, but, if so, our methods below may
represent a step in that direction.

The outline of the paper is as follows.  Section~\ref{PrelimsSect}
begins with the definitions of resolution, degenerate resolution, and
w-resolution, and then regular, tree, and pool resolution.
After that, we define the
graph tautologies $\GT_n$ and the guarded versions $\GGT_n$,
and state the main theorems about proofs
of the $\GGT_n$ principles.  Section~\ref{MainPfSect} gives
the proof of the theorems about pool resolution
and regRTI proofs.  Several ingredients are
needed for the proof.
The first idea is to try to
follow the regular refutations of the
graph tautology clauses $\GT_n$ as given by
St{\aa}lmarck~\cite{Stalmarck:trickyformulas}
and Bonet and Galesi~\cite{BonetGalesi:ResAndPolyCalculus}:
however, these refutations
cannot be used directly since the transitivity clauses
of $\GT_n$ are ``guarded'' in the $\GGT_n$ clauses
and this yields refutations which
violate the regularity/pool property.  So, the
second idea is that
the proof search process branches as needed to learn
transitivity clauses.  This generates additional clauses
that must be proved: to handle these, we develop a notion
of ``bipartite partial order'' and show that the
refutations of
\cite{Stalmarck:trickyformulas,BonetGalesi:ResAndPolyCalculus}
can still be used in the presence of a bipartite partial order.
The tricky part is to be sure that exactly the right set of
clauses is derived by each subproof.
Some straightforward bookkeeping
shows that the resulting proof is polynomial size.

Section~\ref{GreedySect} discusses how to modify
the refutations constructed for Section~\ref{MainPfSect}
so that they are ``greedy'' and ``unit-propagating.
These conditions means that proofs cannot ignore contradictions,
nor contradictions that can be obtained by unit propagation.
The
greedy and unit-propagating conditions correspond well to actual
implemented DPLL proof search algorithms, since they backtrack whenever
a contradiction can be found by unit propagation.  Section~\ref{GreedySect}
concludes with an explicit description of a polynomial time
DPLL clause learning
algorithm for the $\GGT_n$ clauses.

We are grateful to J.\ Hoffmann and J.\ Johannsen for a correction
to an earlier version of the proof of Theorem~\ref{regRtiGgtThm}.
We also thank A.\ Van Gelder, A.\ Beckmann, and T.\ Pitassi for
encouragement, suggestions, and useful comments.

\section{Preliminaries and main results}\label{PrelimsSect}

Propositional formulas are defined over a
set of variables and the connectives $\wedge$, $\vee$ and $\neg$.
We use the notation $\overline{x}$
to express the negation~$\neg x$ of~$x$. A {\em literal} is either
a variable~$x$ or a negated variable~$\overline{x}$.
A {\em clause}~$C$ is a set of literals, interpreted
as the disjunction of its members.
The empty clause,~$\Box$, has truth value {\em False}.
We shall only use formulas
in {\em conjunctive normal form}, CNF; namely, a formula will be
a set (conjunction) of clauses.  We often use disjunction~($\lor$)
and union~($\cup$) interchangeably.

\begin{definition}
The various forms of resolution
take two clauses $A$ and $B$ called the {\em premises}
and a literal $x$
called the {\em resolution variable},
and produce a new clause $C$ called the {\em resolvent}.
\begin{prooftree}
\AxiomC{A} \AxiomC{B}
\BinaryInfC{C}
\end{prooftree}
In all cases below, it is required that $\overline x \notin A$ and
$x\notin B$.
The different forms of resolution are:
\begin{description}
\item {\em Resolution rule.}  The hypotheses have the forms
$A := A'\lor x$ and $B:=B'\lor \, \overline{x}$.
The resolvent~$C$ is $A'\lor B'$.
\item {\em Degenerate resolution rule.} \cite{HBPvG:clauselearn,VanGelder:PoolResolution}
If $x\in A$ and $\overline{x}\in B$,
we apply the resolution rule to obtain~$C$.
If $A$ contains~$x$, and $B$ doesn't contain~$\overline x$,
then the resolvent $C$ is $B$.
If $A$ doesn't contain~$x$,
and $B$ contains~$\overline{x}$,
then the resolvent~$C$ is~$A$.
If neither $A$ nor~$B$ contains the literal $x$ or~$\overline x$,
then $C$ is the lesser of $A$ or~$B$ according to some tiebreaking
ordering of clauses.
\item{\em w-resolution rule.} \cite{BHJ:ResTreeLearning}
From $A$ and~$B$ as above, we infer
$C:=(A\setminus\{x\})\lor(B\setminus\{\overline{x}\})$.
If the literal $x\notin A$ (resp., $\overline x\notin B$), then it
is called a {\em phantom literal} of $A$ (resp.,~$B$).
\end{description}
\end{definition}

\begin{definition}
A {\em resolution derivation}, or {\em proof}, of a clause~$C$
from a CNF formula~$F$
is a sequence of clauses $C_1,\ldots,C_s$ such that $C=C_s$
and such that each clause
from the sequence is either a clause from~$F$ or
is the resolvent of two previous clauses.
If the derived clause,~$C_s$, is the empty clause, this is
called a {\em resolution refutation} of~$F$.
The more general systems of
degenerate and w-resolution refutations are defined similarly.
\end{definition}

We can represent a derivation
as a directed acyclic graph (dag) on the
vertices $C_1,\ldots,C_s$,
where each clause from~$F$ has out-degree~$0$,
and all the other vertices from $C_1,\ldots,C_s$ have edges pointing
to the two clauses from which they were derived.
The empty clause has in-degree~$0$.
We use the terms ``proof'' and ``derivation'' interchangeably.

Resolution is sound and complete
in the refutational sense: a CNF
formula~$F$ has a refutation if
and only if $F$ is unsatisfiable, that is, if and only
if $\lnot F$ is a tautology.
Furthermore, if there is a derivation of a clause~$C$
from~$F$,
then $C$ is a consequence of~$F$;
that is, for every truth assignment~$\sigma$,
if $\sigma$ satisfies $F$ then it satisfies~$C$.
Conversely, if $C$ is a consequence of $F$ then
there is a derivation
of some $C'\subseteq C$ from~$F$.

A resolution refutation is {\em regular} provided that,
along any path in the directed acyclic graph,
each variable is resolved at most once.  A resolution
derivation of a clause~$C$ is {\em regular} provided
that, in addition, no variable appearing in~$C$ is used as
a resolution variable in the derivation.
A refutation is {\em tree-like} if the underlying graph is a tree;
that is, each occurrence of a clause occurring in the refutation
is used at most once as a
premise of an inference.

We next define a version of pool resolution,
using the conventions of~\cite{BHJ:ResTreeLearning} who called
this ``tree-like regular resolution with lemmas''.
The idea is that clauses obtained previously in the proof
can be used freely as learned lemmas.
To be able to talk about clauses previously obtained,
we need to define an ordering of clauses.

\begin{definition}
Given a tree $T$, the {\em postorder}
ordering $<_T$ of the nodes is defined as follows:
if $u$ is a node of~$T$,
$v$~is a node in the subtree rooted at the left child of~$u$,
and $w$~is a node in the subtree rooted at the right child
of~$u$, then $v<_T w<_T u$.
\end{definition}

\begin{definition}
A {\em pool resolution} proof from a set of initial clauses~$F$
is a resolution proof tree~$T$
that fulfills the following conditions:
(a)~each leaf is labeled with either a clause of~$F$ or a clause
(called a ``lemma'')
that appears earlier in the tree in the $<_T$ ordering;
(b)~each internal node is labeled with a clause and a literal,
and the clause is obtained by resolution
from the clauses labeling the node's children
by resolving on the given literal;
(c)~the proof tree is regular;
(d)~the roof is labeled with the conclusion clause.
If the labeling of the root is the empty
clause $\Box$, the pool resolution proof
is a {pool refutation}.
\end{definition}

The notions of {\em degenerate pool resolution} proof and
{\em pool w-resolution} proof are
defined similarly, but allowing degenerate resolution or w-resolution
inferences, respectively.
Note that the two papers
\cite{VanGelder:PoolResolution,HBPvG:clauselearn} defined pool
resolution to be the degenerate pool resolution system, so our notion
of pool resolution is more restrictive than theirs.  (Our definition
is equivalent to the one in~\cite{Buss:poolhard}, however.)

A ``lemma'' in part~(a) of the above definition
is called an {\em input lemma} if it is derived by {\em input}
subderivation, namely by a subderivation
in which each inference has at least one
hypothesis which is a member of~$F$ or is a lemma.

Next we define various graph tautologies, sometimes also
called ``ordering principles''.  They will all
use a size parameter~$n>1$, and variables $x_{i,j}$ with $i,j\in[n]$
and $i\not= j$, where $[n] = \{0,1,2,\ldots,n{-}1\}$.   A variable~$x_{i,j}$
will intuitively represent the condition that $i\prec j$ with $\prec$
intended to be a total, linear order.  We will thus
always adopt the simplifying
convention that $x_{i,j}$ and $\overline x_{j,i}$ are
the identical literal.
This identification makes no essential difference to
the complexity of proofs of the tautologies,
but it reduces the number of literals and clauses,
and simplifies the definitions.

The following principle is based on the tautologies defined by
Krishnamurthy \cite{Krishnamurthy:trickyformulas}.
These tautologies, or similar ones,
have also been studied by \cite{Stalmarck:trickyformulas,%
BonetGalesi:ResAndPolyCalculus,%
AJPU:regularresolution,%
BeckmannBuss:dLK,%
SBI:SwitchingkDnf,%
VanGelder:InputCoverNumber}.

\begin{definition}
Let $n>1$.  Then $\GT_n$ is the following set of
clauses involving the variables $x_{i,j}$, for $i,j\in [n]$ with $i\not= j$.
\begin{enumerate}
\item[($\alpha_\emptyset$)]
The clauses $\bigvee_{j \not= i} x_{j,i}$, for each
value $i<n$.
\item[($\gamma_\emptyset$)] The {\em transitivity clauses} $T_{i,j,k}:=
\overline x_{i,j} \lor \overline x_{j,k} \lor \overline x_{k,i}$
for all distinct $i,j,k$ in $[n]$.
\end{enumerate}
\end{definition}

Note that the clauses $T_{i,j,k}$, $T_{j,k,i}$ and $T_{k,i,j}$ are identical.
For this reason Van Gelder \cite{VanGelder:PoolResolution}
uses the name "no triangles" (NT) for a similar principle.

The next definition is from~\cite{AJPU:regularresolution}, who
used the notation $\GT^\prime_n$.  They
used particular functions $r$ and~$s$ for their lower bound proof,
but since our upper bound proof does not depend on
the details of $r$ and~$s$
we leave them unspecified. We require that $r(i,j,k)\not=s(i,j,k)$ and that
the set $\{r(i,j,k),s(i,j,k)\}\not\subset\{i,j,k\}$.
In addition, w.l.o.g.,
$r(i,j,k)=r(j,k,i)=r(k,i,j)$, and similarly for~$s$.

\begin{definition}  Let $n\ge 1$, and let $r(i,j,k)$ and $s(i,j,k)$ be
functions mapping $[n]^3 \rightarrow [n]$ as above.  The {\em guarded
graph tautology} $\GGT_n$ consists of the following
clauses:
\begin{enumerate}
\item[($\alpha_\emptyset$)]
The clauses $\bigvee_{j \not= i} x_{j,i}$, for each
value $i<n$.
\item[($\gamma^\prime_\emptyset$)] The {\em guarded} transitivity clauses
$T_{i,j,k}\lor x_{r,s}$
and $T_{i,j,k}\lor \overline x_{r,s}$,
for all distinct $i,j,k$ in $[n]$, where $r=r(i,j,k)$ and $s=s(i,j,k)$.
\end{enumerate}
\end{definition}

Our main result is:
\begin{theorem}\label{PoolResGgtThm}
The guarded graph tautology principles $\GGT_n$ have
polynomial size pool resolution refutations.
\end{theorem}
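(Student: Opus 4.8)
The plan is to build the pool refutation of $\GGT_n$ by following the known regular refutation of the unguarded principle $\GT_n$ due to St{\aa}lmarck and Bonet--Galesi, while inserting extra branching to ``learn'' the guarded transitivity clauses. Recall that the St{\aa}lmarck/Bonet--Galesi refutation works by successively eliminating elements $0,1,\ldots,n{-}1$ from the intended linear order, maintaining at stage $m$ the invariant that we are refuting the ordering principle restricted to the remaining domain; combinatorially, each stage uses the transitivity clauses $T_{i,j,k}$ together with the $\alpha$-clauses. The obstruction to using this directly is that in $\GGT_n$ the transitivity clause $T_{i,j,k}$ is not available as an axiom: instead we have the two guards $T_{i,j,k}\lor x_{r,s}$ and $T_{i,j,k}\lor\overline x_{r,s}$, and resolving these two on $x_{r,s}$ to recover $T_{i,j,k}$ is a harmless inference in general resolution but destroys regularity, because $x_{r,s}$ will typically need to be resolved again elsewhere along the same path.

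First I would set up the bookkeeping: for each place in the skeleton proof where the unguarded refutation uses a transitivity clause $T_{i,j,k}$, the pool proof instead branches on the guard literal $x_{r,s}=x_{r(i,j,k),s(i,j,k)}$. On the branch where $x_{r,s}$ is set one way the guard clause $T_{i,j,k}\lor\overline x_{r,s}$ (or its partner) acts exactly like $T_{i,j,k}$; on the other branch we have instead committed to a literal of the form $x_{r,s}$, which has the effect of adding an ordering constraint $r\prec s$ (or $s\prec r$) between two domain elements. This is where the notion of \emph{bipartite partial order} advertised in the introduction enters: as we descend the proof tree we accumulate a partial order on $[n]$ that is bipartite in a controlled sense (a set of ``already decided'' minimal elements sitting below the rest), and the key lemma to prove is that the St{\aa}lmarck/Bonet--Galesi construction still produces a valid regular derivation when run relative to such a partial order — i.e.\ the ordering clauses one needs are either genuine $\GGT_n$ axioms, guard clauses usable on the current branch, or clauses already derived earlier in $<_T$ order and hence available as lemmas. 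The heart of the argument is to maintain exactly the right invariant on which clauses each subproof derives, so that when the branches on $x_{r,s}$ are resolved back together the resolvent is precisely the clause demanded by the skeleton, and so that regularity is never violated (each variable $x_{i,j}$, including each guard variable, is resolved at most once along any root-to-leaf path).

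Then I would verify the two global properties. Regularity: the skeleton refutation of $\GT_n$ is regular, the branchings we insert each resolve a fresh guard variable, and the ``already decided'' structure of the bipartite partial order guarantees that once an element has been eliminated its incident variables are never revisited; a careful accounting shows no variable repeats on a path. Size: each of the $O(n^3)$ transitivity steps of the skeleton is replaced by a bounded-depth branch plus a recursive call, and since the recursion depth is $O(n)$ and the work at each node is polynomial, the total tree size is polynomial in $n$ — this is the ``straightforward bookkeeping'' mentioned in the introduction. Finally, since all learned clauses are input lemmas and no degenerate or w-resolution inferences are used, the construction also yields a regRTI proof, giving Theorem~\ref{PoolResGgtThm}.

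I expect the main obstacle to be the invariant-tracking in the middle step: ensuring that \emph{exactly} the intended set of clauses is produced by each subderivation relative to a bipartite partial order, so that the re-resolution of the two guard branches yields the correct resolvent with no extra literals, and so that the clauses needed on each branch are genuinely available as $<_T$-earlier lemmas rather than clauses derived on a sibling branch. Getting the definition of ``bipartite partial order'' and its interaction with the St{\aa}lmarck/Bonet--Galesi recursion exactly right — tight enough to preserve regularity, loose enough to carry the induction through all the guard branchings — is the delicate part; everything else is soundness checking and size counting.
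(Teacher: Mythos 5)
Your high-level plan --- follow the St{\aa}lmarck/Bonet--Galesi refutation of $\GT_n$, branch so as to learn guarded transitivity clauses, and track the accumulated constraints with bipartite partial orders --- is the paper's plan, but the concrete mechanism you propose for handling a guard is not the one the paper uses, and as stated it does not work. You propose that each use of $T_{i,j,k}$ is replaced by a branch on its guard variable $x_{r,s}$, with the ``other'' branch carrying $x_{r,s}$ as a new ordering constraint $r\prec s$ folded into the bipartite partial order. In the paper the guard literal is never interpreted as an ordering constraint and never enters the bipartite partial order; the new ordered pairs come from branching on the transitivity variables $x_{i,j}$, $x_{j,k}$ themselves. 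What governs the treatment of a guard is a four-way case split on the status of $x_{r,s}$: \caseiti{}~$T_{i,j,k}$ is already a lemma to the left, so no guard is needed; \caseitii{}~$x_{r,s}$ is already set on the current branch, in which case one must use the single guarded axiom of matching polarity and propagate the extra guard literal \emph{down} the proof as a side literal (no branch at all); \caseitiii{}~$x_{r,s}$ is unused both on the branch and inside $P_\pi$, in which case the two guarded axioms are simply resolved in place; and \caseitiv{}~$x_{r,s}$ is resolved on \emph{inside} $P_\pi$ below the point where $T_{i,j,k}$ is needed. Case \caseitiv{} is the real obstruction and cannot be fixed locally: the paper abandons $P_\pi$ entirely, resolves the two guarded axioms once at the top of a fresh subproof (safe for regularity precisely because case \caseitii{} did not apply), thereby learning $T_{i,j,k}$, and only then branches on $x_{i,j}$ and $x_{j,k}$ to spawn two or three new unfinished leaves. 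Moreover those leaves are not automatically of the form $\falseBPOpi$ for the new bipartite partial orders: auxiliary derivations (the $S_1,\ldots,S_5$ of the paper) must trade literals $\overline x_{j,\ell}$ for $\overline x_{i,\ell}$ via further guarded transitivity clauses before the leaf invariant is restored. You correctly flag this invariant restoration as the delicate point but give no mechanism for it, and branching on the guard at every use would itself break regularity whenever one variable $x_{r,s}$ guards two transitivity clauses used on the same path, or is a resolution variable of $P_\pi$.

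The termination and size argument is also not the one you give, and your version has a gap. You bound the size by ``recursion depth $O(n)$ times polynomial work per node,'' but the recursion generated by case \caseitiv{} is not bounded by depth in any evident way: each unfinished leaf it spawns may itself trigger case \caseitiv{} again. The paper instead uses a potential argument: every invocation of case \caseitiv{} permanently learns at least one new transitivity clause, so that case fires at most $2{n \choose 3}$ times; each firing adds at most three unfinished leaves, so only $O(n^3)$ unfinished leaves are ever created, each discharged by a derivation of size $O(n^3)$, giving $O(n^6)$ lines overall. Without this learning-based potential (or some substitute for it), your size bound is unsupported.
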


The proof of Theorem~\ref{PoolResGgtThm} will construct
pool refutations in the form of regular tree-like refutations
with lemmas.
A key part of this is
learning transitive closure clauses
that are derived using resolution
on the guarded transitivity clauses of~$\GGT_n$.
A slightly modified construction, that uses a
result from~\cite{BHJ:ResTreeLearning},
gives instead tree-like regular resolution
refutations with {\em input} lemmas.  This will establish
the following:
\begin{theorem}\label{regRtiGgtThm}
The guarded graph tautology principles $\GGT_n$ have
polynomial size, tree-like regular resolution refutations with input lemmas.
\end{theorem}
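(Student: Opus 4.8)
The plan is to build directly on the construction used to prove Theorem~\ref{PoolResGgtThm}, and then invoke the characterization of regWRTI proofs from~\cite{BHJ:ResTreeLearning} to convert the learned lemmas into input lemmas. Recall that the pool refutation constructed for Theorem~\ref{PoolResGgtThm} is already a regular tree-like refutation with lemmas, and that its lemmas are of a very restricted form: they are exactly the transitive closure clauses derived using a small number of resolution steps on the guarded transitivity pairs $T_{i,j,k}\lor x_{r,s}$ and $T_{i,j,k}\lor\overline x_{r,s}$ from~($\gamma'_\emptyset$). So the first step is to inspect those lemma derivations and check that each one is in fact an \emph{input} derivation in the sense defined after the pool resolution definition --- i.e.\ every inference in the subderivation deriving the lemma has at least one hypothesis that is either a clause of $\GGT_n$ or a previously-established lemma.

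The key observation is that the only nontrivial lemmas we need are the ``unguarded'' transitivity clauses $T_{i,j,k} = \overline x_{i,j}\lor\overline x_{j,k}\lor\overline x_{k,i}$, and each of these is obtained by a single resolution step from the two $\GGT_n$ axioms $T_{i,j,k}\lor x_{r,s}$ and $T_{i,j,k}\lor\overline x_{r,s}$, resolving on $x_{r,s}$. That is literally an input inference: both premises are initial clauses. Any further lemmas arising in the bipartite-partial-order machinery are likewise short derivations whose premises are axioms or earlier transitivity lemmas; one walks through the subproofs produced in Section~\ref{MainPfSect} and confirms that the ``following the St{\aa}lmarck/Bonet--Galesi refutation'' part introduces no lemma that fails the input condition, since those classical $\GT_n$ refutations themselves use the transitivity clauses only as leaves, never re-deriving them. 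Once this bookkeeping is done, the refutation is a tree-like regular resolution refutation in which every leaf is either an initial clause or an input lemma --- which is precisely a regRTI proof, hence a fortiori a regWRTI proof. Since the size bounds established for Theorem~\ref{PoolResGgtThm} are unaffected by this reinterpretation, the resulting regRTI refutation is still polynomial size.

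The main obstacle is the ``tricky part'' already flagged in the introduction: ensuring that each subproof derives \emph{exactly} the intended set of clauses, so that when a lemma is cited later it genuinely appears earlier in the $<_T$ ordering \emph{and} was produced by an input subderivation rather than by some inference that happens to use two non-axiom, non-lemma intermediates. Concretely, one must verify that the branching structure introduced to learn the transitivity clauses is organized so that the learned clause is always the root of a subtree all of whose internal inferences are input inferences --- this is where the explicit description of the refutation from Section~\ref{MainPfSect} must be consulted carefully, and where the correction acknowledged to Hoffmann and Johannsen presumably bears. Modulo that verification, the theorem follows immediately by combining the construction with the soundness/completeness of the regRTI (equivalently regWRTI) framework of~\cite{BHJ:ResTreeLearning}. \qed
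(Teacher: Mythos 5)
There is a genuine gap. Your proposal rests on the claim that the pool refutation from Theorem~\ref{PoolResGgtThm} is \emph{already} a regRTI proof because ``its lemmas are exactly the transitive closure clauses,'' each obtained by one resolution on two $\GGT_n$ axioms. That is not so. In the Theorem~\ref{PoolResGgtThm} construction, each subproof $S$ is obtained from the dag-like derivation $P'_\pi$ by a depth-first traversal, and every \emph{derived} clause of $P'_\pi$ that is used more than once as a premise becomes a lemma in $S$. The St{\aa}lmarck/Bonet--Galesi refutations (and hence the derivations $P_\pi$ of Lemma~\ref{BpoDerivationLm}) are genuinely dag-like: intermediate clauses such as the $T'_{i,m}$ and the derived $(\alpha)$-type clauses of the recursive $\GT_m$ refutation are each reused as premises $\Theta(n)$ times, which is exactly why $P_n$ has size $O(n^3)$ rather than exponential tree size. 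Those reused intermediate clauses are derived by resolving two previously derived clauses, so they are not input lemmas, and your observation that the classical refutations ``use the transitivity clauses only as leaves'' does not address this --- the problematic lemmas are not transitivity clauses at all. Citing the regWRTI characterization of~\cite{BHJ:ResTreeLearning} at the end does not repair this, since that characterization presupposes that the lemmas are input lemmas.

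The paper closes exactly this gap by replacing the naive depth-first traversal with Theorem~3.3 of~\cite{BHJ:ResTreeLearning}, which converts a regular dag-like derivation into a regular tree-like derivation using only input lemmas, at the cost of multiplying the size by the depth of the original derivation. Since $P_\pi$ can be arranged to have depth $O(n)$, this raises the size bound from $O(n^6)$ to $O(n^7)$ --- a change your proposal also misses when it asserts the size bounds are ``unaffected.'' Your analysis of the clauses $T_{i,j,k}$ learned in cases \caseitiii{} and~\caseitiv{} is correct as far as it goes (those really are input lemmas), but that covers only part of the lemmas actually used.
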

A consequence of Theorem~\ref{regRtiGgtThm} is that
the $\GGT_n$ clauses can be shown unsatisfiable by
non-greedy polynomial size DPLL searches using clause learning.
This follows via
Theorem~5.6 of~\cite{BHJ:ResTreeLearning},
since the refutations of~$\GGT_n$ are regRTI, and hence regWRTI, proofs
in the sense of~\cite{BHJ:ResTreeLearning}.

However, as discussed in Section~\ref{GreedySect}, we can improve
the constructions of Theorems \ref{PoolResGgtThm} and~\ref{regRtiGgtThm}
to show that
the $\GGT_n$ principles can be refuted also by
{\em greedy} and {\em unit-propagating} polynomial size DPLL searches with
clause learning.

\section{Proof of main theorem}\label{MainPfSect}

The following theorem is an important ingredient of our upper bound proof.

\begin{theorem}\label{GtProofsThm}
{\rm (St{\aa}lmarck~\cite{Stalmarck:trickyformulas};
Bonet-Galesi~\cite{BonetGalesi:ResAndPolyCalculus};
Van Gelder~\cite{VanGelder:InputCoverNumber})}
The sets $\GT_n$ have regular resolution
refutations~$P_n$ of polynomial size~$O(n^3)$.
\end{theorem}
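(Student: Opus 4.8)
\textbf{Proof proposal for Theorem~\ref{GtProofsThm}.}

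The plan is to build the refutation of $\GT_n$ by induction on~$n$, following the approach of St\aa lmarck and of Bonet--Galesi: at stage~$m$ (working down from $m=n$ to $m=2$) we will have derived the clauses asserting that $\{0,1,\ldots,m-1\}$ carries a ``minimum'' element, and we eliminate one index at a time. Concretely, for a set $S\subseteq[n]$ and an index $i\in S$, write $\alpha^S_i$ for the clause $\bigvee_{j\in S,\ j\ne i} x_{j,i}$, which says intuitively that $i$ is not the $\prec$-minimum of~$S$. The initial clauses $(\alpha_\emptyset)$ are exactly the clauses $\alpha^{[n]}_i$. The key derivation step is: from $\alpha^S_i$ (for all $i\in S$) together with the transitivity clauses $T_{i,j,k}$, and choosing any fixed element $a\in S$ to eliminate, derive $\alpha^{S\setminus\{a\}}_i$ for every $i\in S\setminus\{a\}$. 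Iterating, after removing $n-2$ elements we are left with a two-element set $\{b,c\}$ and the two unit clauses $\alpha^{\{b,c\}}_b=\{x_{c,b}\}$ and $\alpha^{\{b,c\}}_c=\{x_{b,c}\}=\{\overline x_{c,b}\}$, which resolve to~$\Box$.

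The heart of the matter is the single elimination step, so let me describe it. Fix $S$ and the element $a\in S$ to be removed; let $S'=S\setminus\{a\}$. For each $i\in S'$ we want $\alpha^{S'}_i = \bigvee_{j\in S',\ j\ne i} x_{j,i}$, and what we have available for~$i$ is $\alpha^S_i$, which is $\alpha^{S'}_i \lor x_{a,i}$. So it suffices to derive, for each $i\in S'$, a clause $\alpha^{S'}_i \lor \overline{x_{a,i}}$ and resolve on $x_{a,i}$. Now $\overline{x_{a,i}}$ is the literal $x_{i,a}$, and the transitivity clause $T_{a,j,i}$ is $\overline x_{a,j}\lor\overline x_{j,i}\lor \overline x_{i,a} = x_{j,a}\lor x_{i,j}\lor x_{i,a}$; rewriting, $T$ gives $\overline{x_{j,i}} \lor \overline{x_{a,j}} \lor x_{i,a}$, i.e. a way to turn the literal $x_{j,i}$ (a disjunct of $\alpha^{S'}_i$) into $\overline{x_{a,j}}$ modulo the target literal $x_{i,a}$. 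Starting from $\alpha^S_a = \bigvee_{j\in S,\ j\ne a} x_{j,a}$ and resolving successively, for each $j\in S'$, against the transitivity clause $T_{a,j,i}$ on the variable $x_{a,j}$, one replaces each disjunct $x_{j,a}$ of $\alpha^S_a$ by the disjunct $\overline{x_{j,i}}$, accumulating the literal $x_{i,a}$, so that one obtains precisely $\bigl(\bigvee_{j\in S',\ j\ne i}\overline{x_{j,i}}\bigr) \lor x_{i,a} \lor x_{a,i}$; but $x_{a,i}$ and $x_{i,a}$ are the same literal, so after resolving that against $\alpha^S_i$ we get $\bigvee_{j\in S',\ j\ne i}\overline{x_{j,i}}$ together with $\bigvee_{j\in S',\ j\ne i} x_{j,i}$ --- wait, more care is needed here with which literals survive, and the correct bookkeeping (carried out in the cited papers) is to resolve the chain of $T_{a,j,i}$'s with $\alpha^S_a$ and then with $\alpha^S_i$ so that the $\overline{x_{j,i}}$ literals are resolved away one at a time against the $x_{j,i}$ literals of $\alpha^S_i$, leaving exactly $\alpha^{S'}_i$. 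The regularity check is that within the derivation of a given $\alpha^{S'}_i$ the resolution variables used are $x_{a,j}$ for $j\in S'$ and $x_{j,i}$ for $j\in S'\setminus\{i\}$ and finally $x_{a,i}$ --- all distinct --- and across stages the eliminated index $a$ never reappears as a subscript of a resolution variable in later stages, so no variable is resolved twice on any path.

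For the size bound: there are $n-1$ elimination stages; stage $S$ (with $|S|=m$) produces $m-1$ clauses $\alpha^{S'}_i$, each by a derivation of $O(m)$ resolution steps, for a total of $O(n)\cdot O(n)\cdot O(n)=O(n^3)$ clauses, and tree-likeness can be arranged by simply recomputing $\alpha^S_a$ as needed (it is used $m-1$ times, but re-deriving it costs only $O(m)$ each time, not changing the $O(n^3)$ bound) --- in fact the cited constructions give a dag of size $O(n^3)$ directly. The main obstacle, and the only genuinely delicate point, is the exact bookkeeping in the elimination step: one must verify that resolving the transitivity chain against $\alpha^S_a$ and $\alpha^S_i$ yields \emph{exactly} $\alpha^{S'}_i$ with no extra literals and no missing literals (in particular that the various $x_{i,a}/x_{a,i}$ identifications collapse correctly and that the order of resolutions respects regularity), and that this works uniformly for every choice of eliminated element~$a$. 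Since this is precisely the content of the St\aa lmarck and Bonet--Galesi refutations, I would cite their construction for the detailed verification and limit my own exposition to setting up the clauses $\alpha^S_i$, stating the elimination lemma, and checking the regularity and size accounting. $\qed$
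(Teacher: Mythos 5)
The paper itself offers no proof of this theorem: it explicitly treats the refutations $P_n$ as a black box, citing St{\aa}lmarck, Bonet--Galesi, and Van Gelder, and the only properties it ever uses are regularity and the $O(n^3)$ size bound. So your decision to set up the clauses $\alpha^S_i$, state the elimination step, and defer the fine bookkeeping to the cited constructions is, at the level of strategy, exactly what the paper does, and your overall plan (eliminate one index at a time, ending with the two complementary unit clauses) is the standard construction.

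However, the sketch of the elimination step as you wrote it contains a genuine error, and your regularity argument does not survive it. To replace the disjunct $x_{j,a}$ of $\alpha^S_a$ you must resolve against $T_{i,j,a}=\overline x_{i,j}\lor\overline x_{j,a}\lor\overline x_{a,i}$ on the variable $x_{j,a}$; this replaces $x_{j,a}$ by $\overline x_{i,j}=x_{j,i}$ (a \emph{positive} disjunct of $\alpha^{S'}_i$) plus the side literal $\overline x_{a,i}=x_{i,a}$, which merges with the $j=i$ disjunct of $\alpha^S_a$. After running over all $j\in S'\setminus\{i\}$ one has exactly $\alpha^{S'}_i\lor\overline x_{a,i}$, and one final resolution with $\alpha^S_i$ on $x_{a,i}$ gives $\alpha^{S'}_i$; every resolution variable in this subderivation involves the eliminated index $a$. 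Your version resolves against $T_{a,j,i}$ on $x_{a,j}$, but $T_{a,j,i}$ and $\alpha^S_a$ both contain the literal $\overline x_{a,j}=x_{j,a}$, so that is not even a legal resolution; and the repair you then propose---cancelling the resulting $\overline x_{j,i}$'s one at a time against the $x_{j,i}$'s of $\alpha^S_i$---resolves on the variables $x_{j,i}$ with $j,i\in S'$, which must be resolved on again in later stages (indeed your own ``regularity check'' lists $x_{j,i}$ for $j\in S'\setminus\{i\}$ as resolution variables of this stage). Your argument that regularity holds because the eliminated index $a$ never reappears covers only the $a$-variables and says nothing about these, so as written the construction is not regular. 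The fix is simply to use the correct orientation of the transitivity clause, after which no variable free of the index $a$ is resolved on at this stage and regularity is immediate; the $O(n^3)$ size count is fine (and note the theorem asks only for a regular dag-like refutation, so the tree-likeness digression is unnecessary).
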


We do not include a direct proof of Theorem~\ref{GtProofsThm}
here, which can be found in
\cite{Stalmarck:trickyformulas,BonetGalesi:ResAndPolyCalculus,VanGelder:InputCoverNumber}.
The present paper uses the proofs~$P_n$ as a ``black box'';
the only property needed is that the $P_n$'s are
regular and polynomial size.
Lemma~\ref{BpoDerivationLm} below
is a direct generalization to Theorem~\ref{GtProofsThm}; in fact,
when specialized to
the case of $\pi =\emptyset$, it is identical to Theorem~\ref{GtProofsThm}.

The refutations~$P_n$ can be modified to give refutations
of $\GGT_n$ by first deriving each transitive clause~$T_{i,j,k}$
from the two guarded transitivity clauses of~$(\gamma^\prime_\emptyset)$.
This however destroys the regularity property, and
in fact no polynomial size regular refutations exist
for $\GGT_n$~\cite{AJPU:regularresolution}.

As usual, a {\em partial order} on $[n]$ is an
antisymmetric, transitive relation binary relation
on~$[n]$.  We shall be mostly interested in ``partial
specifications'' of partial orders: partial
specifications are not
required to be transitive.

\begin{definition} A {\em partial specification},~$\tau$, of a
partial order is a set of ordered pairs $\tau\subseteq[n]\times[n]$
which are consistent with some (partial) order.
The minimal partial order containing~$\tau$ is the
transitive closure of~$\tau$.  We
write $i\prec_\tau j$ to denote $\langle i,j\rangle \in \tau$,
and write $i\prec_\tau^* j$ to denote that $\langle i,j\rangle$ is in the
transitive closure of~$\tau$.

The {\em $\tau$-minimal} elements are the $i$'s such that
$j\prec_\tau i$ does not hold for any~$j$.
\end{definition}

We will be primarily interested in particular
kinds of partial orders, called ``bipartite'' partial orders,
that can be associated with partial orders.  A bipartite partial order is a
partial order that does not have any chain of inequalities
$x\prec y\prec z$.

\begin{definition}
A {\em bipartite partial order} is a binary relation~$\pi$
on $[n]$
such that the domain and range of~$\pi$
do not intersect.
The set of $\pi$-minimal elements is denoted $M_\pi$.
\end{definition}
The righthand side of Figure~\ref{BiPartiteExampleFig} shows an example.
The bipartiteness of~$\pi$ arises from the fact that $M_\pi$ and
$[n]\setminus M_\pi$ partition $[n]$ into two sets.
Note that if $i \prec_\pi j$, then $i\in M_\pi$ and $j\notin M_\pi$.
In addition, $M_\pi$~contains the isolated points of~$\pi$.
\begin{definition}
Let $\tau$ be a specification of a partial order.
The bipartite partial order~$\pi$ that is {\em associated with}
$\tau$ is defined by letting $i\prec_\pi j$ hold for precisely
those $i$ and~$j$ such that $i$ is $\tau$-minimal
and $ i \prec^*_\tau j$.
\end{definition}
It is easy to check that the $\pi$ associated with~$\tau$
is in fact a bipartite partial order.
The intuition is that $\pi$~retains only the information about
whether $i\prec^*_\tau j$ for minimal elements~$i$,
and forgets the ordering that $\tau$ imposes on
non-minimal elements. Figure~\ref{BiPartiteExampleFig} shows an example of
how to obtain a bipartite partial order from a partial specification.

\begin{figure}[t]
\begin{center}
\psset{unit=10mm,arrowscale=1.4 1.2}     % Typesetter: Scale picture size with this parameter
\begin{pspicture}(0,0)(4.7,2)
\pscircle*(0,0){2pt}
\uput[0](0,0){$1$}
\pscircle*(1,0){2pt}
\uput[0](1,0){$2$}
\pscircle*(2,0){2pt}
\uput[0](2,0){$3$}
\pscircle*(3,0){2pt}
\uput[0](3,0){$4$}
\pscircle*(4,0){2pt}
\uput[0](4,0){$5$}
\pscircle*(1.5,1){2pt}
\uput[0](1.5,1){$6$}
\pscircle*(2.5,1){2pt}
\uput[0](2.5,1){$7$}
\pscircle*(3.5,1){2pt}
\uput[0](3.5,1){$8$}
\pscircle*(4.5,1){2pt}
\uput[0](4.5,1){$9$}
\pscircle*(2,2){2pt}
\uput[0](2,2){$10$}
\pscircle*(4,2){2pt}
\uput[0](4,2){$11$}
%% \rput(-14,0){$M_\pi$:}
%% \rput(-14,20){$[n]\setminus M_\pi$:}
\psline{->}(2,0)(1.5,1)
\psline{->}(1.5,1)(2,2)
\psline{->}(2.5,1)(2,2)
\psline{->}(3,0)(2.5,1)
\psline{->}(3,0)(3.5,1)
\psline{->}(4,0)(3.5,1)
\psline{->}(4,0)(4.5,1)
\psline{->}(4.5,1)(4,2)
\end{pspicture}
\hfill
\raise 10mm \hbox{$\Rightarrow$}
\hfill
\raise 5mm \hbox{
\begin{pspicture}(-1.5,0)(4.5,1)
\pscircle*(0,0){2pt}
\uput[0](0,0){$1$}
\pscircle*(1,0){2pt}
\uput[0](1,0){$2$}
\pscircle*(2,0){2pt}
\uput[0](2,0){$3$}
\pscircle*(3,0){2pt}
\uput[0](3,0){$4$}
\pscircle*(4,0){2pt}
\uput[0](4,0){$5$}
\pscircle*(1.5,1){2pt}
\uput[45](1.5,1){$6$}
\pscircle*(3,1){2pt}
\uput[90](3,1){$7$}
\pscircle*(3.5,1){2pt}
\uput[90](3.5,1){$8$}
\pscircle*(4,1){2pt}
\uput[90](4,1){$9$}
\pscircle*(2.5,1){2pt}
\uput[90](2.5,1){$10$}
\pscircle*(4.5,1){2pt}
\uput[90](4.5,1){$11$}
\rput(-1,1){$[n]-M_\pi$:}
\rput(-1,0){$M_\pi$:}
\psline{->}(2,0)(1.5,1)
\psline{->}(2,0)(2.5,1)
\psline{->}(3,0)(3,1)
\psline{->}(3,0)(2.5,1)
\psline{->}(3,0,1)(3,1)
\psline{->}(3,0)(3.5,1)
\psline{->}(4,0)(3.5,1)
\psline{->}(4,0)(4,1)
\psline{->}(4,0)(4.5,1)
\end{pspicture}
}
\end{center}
\caption{Example of a partial specification of a partial order (left)
and the associated bipartite partial order (right).}
\label{BiPartiteExampleFig}
\end{figure}
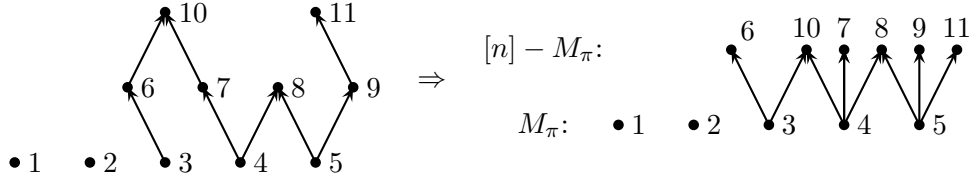

We define the graph tautology~$\GT_{\pi,n}$ relative to~$\pi$
as follows.
\begin{definition} Let $\pi$ be a bipartite partial order on~$[n]$.
Then $\GT_{\pi,n}$ is the set of clauses containing:
\begin{enumerate}
\item[($\alpha$)] The clauses $\bigvee_{j \not= i} x_{j,i}$, for each
value $i\in M_\pi$.
\item[($\beta$)] The transitivity clauses $T_{i,j,k}:=
\overline x_{i,j} \lor \overline x_{j,k} \lor \overline x_{k,i}$
for all distinct $i,j,k$ in $M_\pi$. (Vertices $i,j,k^\prime$
in Figure~\ref{BipartiteFig} show an example.)
\item[($\gamma$)] The transitivity clauses $T_{i,j,k}$
for all distinct $i,j,k$ such that $i,j\in M_\pi$ and
$i\not\prec_\pi k$ and $j\prec_\pi k$.
(As shown in
Figure~\ref{BipartiteFig}.)
\end{enumerate}
\end{definition}

The set $\GT_{\pi,n}$ is satisfiable if $\pi$ is nonempty.
As an example, there is the assignment that sets $x_{j,i}$ true
for some fixed $j\notin M_\pi$ and every $i\in M_\pi$, and sets
all other variables false.
However, if $\pi$
is applied as a restriction,
then $\GT_{\pi,n}$ becomes unsatisfiable.
That is to say, there is no assignment which
satisfies $\GT_{\pi,n}$ and is consistent with~$\pi$.
This fact
is proved by the regular derivation~$P_\pi$ described in the next
lemma.

\begin{definition} For $\pi$ a bipartite partial order, the clause
$\falseBPOpi$ is defined by
\[
\falseBPOpi ~:=~ \{ \overline x_{i,j} : i\prec_\pi j \},
\]
\end{definition}

\begin{lemma}\label{BpoDerivationLm}
Let $\pi$ be a bipartite partial order on~$[n]$.
Then there
is a regular derivation~$P_\pi$ of $\falseBPOpi$
from the set $\GT_{\pi,n}$.

The only variables resolved on in~$P_\pi$ are the following:
the variables $x_{i,j}$ such that $i,j\in M_\pi$,
and the variables $x_{i,k}$ such that $k\notin M_\pi$,
$i\in M_\pi$, and
$i\not\prec_\pi k$.
\end{lemma}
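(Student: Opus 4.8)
The plan is to proceed by induction on the size of $[n] \setminus M_\pi$, i.e., on the number of non-minimal elements, reducing in each step to a bipartite partial order with strictly more minimal elements, until we reach the case $\pi = \emptyset$ where $M_\pi = [n]$ and the claim is exactly Theorem~\ref{GtProofsThm} (with $\falseBPOpi = \Box$). The base case thus comes for free from the ``black box'' regular refutation $P_n$ of $\GT_n$. For the inductive step, I would pick a convenient element to ``promote'': the natural choice is some $k \in [n] \setminus M_\pi$, and the goal is to case-split on the variables $x_{i,k}$ for $i \in M_\pi$ with $i \not\prec_\pi k$. Intuitively, if we set $x_{i,k}$ true for every such $i$, then $k$ becomes a ``sink'' dominated by all minimal elements, and removing $k$ leaves a smaller bipartite partial order $\pi'$ on $[n]$ (with $k$ now effectively $\pi'$-minimal in the sense that it has no predecessors left to record, or more precisely $\pi' $ is $\pi$ restricted away from the pairs mentioning~$k$); the $\alpha$-clause $\bigvee_{j\neq i}x_{j,i}$ for the promoted index, together with the transitivity clauses of type $(\gamma)$ that involve $k$, is precisely what is needed to drive the case analysis.

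The key steps, in order, would be: (1) Fix $k \notin M_\pi$ and enumerate the minimal elements $i_1,\dots,i_m$ with $i_\ell \not\prec_\pi k$; build a small tree-like/regular ``case tree'' that resolves successively on $x_{i_1,k}, x_{i_2,k}, \dots$, so that along each branch some $x_{i_\ell,k}$ is set false (the ``early exit'' branches) and on the last branch all of them are set true. (2) On an early-exit branch where $x_{i_\ell,k}$ is false, i.e.\ $\overline{x}_{i_\ell,k}$ holds, observe that this literal already belongs to a clause derivable toward $\falseBPOpi$ once we note $i_\ell \prec_{\pi'} k$ would be forced — here one must instead argue that the needed clause is obtained directly, perhaps as a weakening, and is available as a lemma; this is where care with ``exactly the right set of clauses'' is needed. (3) On the all-true branch, the hypotheses now include $x_{i,k}$ for all minimal $i$ (either originally, via $i \prec_\pi k$, or freshly assumed), so every transitivity clause $T_{i,j,k}$ of type $(\gamma)$ collapses: resolving $T_{i,j,k}$ against $x_{i,k} = \overline{x}_{k,i}$ yields $\overline{x}_{i,j}\lor\overline{x}_{j,k}$, and continuing we recover exactly the transitivity clauses $T_{i,j}'$ and $\alpha$-clauses of a bipartite partial order $\pi'$ on one fewer non-minimal vertex, to which we apply the induction hypothesis to get a regular derivation of $\falseBPO{\overline{\pi'}}$. (4) Combine: $\falseBPO{\overline{\pi'}}$ together with the assumed/derived literals $\overline{x}_{i,k}$ (for the $i$ with $i\prec_\pi k$) reconstitutes $\falseBPOpi$, and the case-tree stitches the branches together; finally verify the regularity bookkeeping — each $x_{i,k}$ is resolved only in the case tree and nowhere below, each variable internal to the recursive call is untouched here, and no variable of the conclusion $\falseBPOpi$ is resolved on — which gives precisely the list of resolved variables claimed in the lemma.

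The main obstacle, as the paper's introduction already warns (``the tricky part is to be sure that exactly the right set of clauses is derived by each subproof''), will be step~(2)/(4): making the $\GT_{\pi',n}$-hypotheses match up \emph{exactly} with what the recursive derivation consumes and producing \emph{exactly} the clause $\falseBPOpi$ (not a proper subclause or superclause) at the root, all while never reusing a resolution variable along any path. In particular I expect subtlety in how a non-minimal $k$'s predecessors interact with the $(\gamma)$-clauses versus the $(\beta)$-clauses, and in checking that the derived clauses on the ``all true'' branch are neither too strong nor too weak to serve as the $(\alpha)$, $(\beta)$, $(\gamma)$ clauses of $\GT_{\pi',n}$. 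Getting the definition of $\pi'$ right — and confirming it really is a bipartite partial order with $M_{\pi'} \supsetneq M_\pi$ — is the linchpin; once that is pinned down, the size bound is the promised ``straightforward bookkeeping'' (the case tree at each level has size polynomial in $n$, and there are at most $n$ levels of recursion).
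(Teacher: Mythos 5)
Your overall strategy --- peel off one non-minimal element $k$ at a time, branch on the variables $x_{i,k}$ for minimal $i$ with $i\not\prec_\pi k$, and recurse on a bipartite order with one more minimal element --- contains the right resolution steps but not a workable organization, and the place where you yourself hedge (your step~(2)) is exactly where it breaks. The only initial clauses containing the literal $x_{i_\ell,k}=\overline x_{k,i_\ell}$ needed to close an ``early exit'' branch are the type~($\gamma$) clauses $T_{i_\ell,j,k}=\overline x_{i_\ell,j}\lor\overline x_{j,k}\lor\overline x_{k,i_\ell}$ with $j\prec_\pi k$. Such a clause carries the third literal $\overline x_{i_\ell,j}=x_{j,i_\ell}$, a literal over two \emph{minimal} elements: it is not in $\falseBPOpi$, it is not set by your case tree, and it cannot be propagated down to the root (which must be exactly $\falseBPOpi$). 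So a standalone case tree on the $x_{i_\ell,k}$ with the recursive call sitting on the all-true branch cannot be closed; these stray literals $x_{j,i_\ell}$ must be absorbed by something, and the only available absorber is the type~($\alpha$) clause $\bigvee_{j\neq i_\ell}x_{j,i_\ell}$, which already contains them. A second unaddressed problem is regularity of the recursion: if $\pi'$ is $\pi$ with the pairs mentioning $k$ deleted, then $k\in M_{\pi'}$, and the inductive derivation for $\pi'$ is permitted (and, once the base case's black-box refutation of $\GT_m$ is reached, will in general need) to resolve on the variables $x_{i,k}$ with $i\in M_{\pi'}$ --- the very variables your case tree for $k$ has just consumed on the same path. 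Relatedly, your step~(3) is miscalibrated: on the all-true branch the literal $\overline x_{k,i}=x_{i,k}$ is \emph{satisfied}, so the ($\gamma$) clauses do not ``collapse'' there; they are usable precisely on the branches where some $x_{i,k}$ is false, paired against the ($\alpha$) clause containing $x_{k,i}$.

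The paper's proof avoids both issues by not inducting at all. For each fixed minimal $i$ it resolves the ($\alpha$)-clause $\bigvee_{j\neq i}x_{j,i}$ successively against $T_{i,J_k,k}$ for every non-minimal $k$ with $i\not\prec_\pi k$ (where $J_k\prec_\pi k$ is fixed); the side literal $\overline x_{i,J_k}=x_{J_k,i}$ is already present in the ($\alpha$)-clause and vanishes by idempotence, leaving a clause $T'_{i,m}$ which is the ($\alpha_\emptyset$)-clause of $\GT_m$ on the minimal elements plus literals of $\falseBPOpi$. These clauses, together with the type~($\beta$) clauses (which are exactly the transitivity clauses of $\GT_m$), are then fed into the refutation of $\GT_m$ from Theorem~\ref{GtProofsThm}, carrying the $\falseBPOpi$ literals along as side literals; regularity holds because the first phase resolves only on variables $x_{k,i}$ with $k\notin M_\pi$ and the second only on variables over $M_\pi$. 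To salvage your induction you would have to build the case tree for $k$ \emph{into} each ($\alpha$)-clause (so the recursive call receives the ($\alpha$)-clauses with $x_{k,i}$ already removed) and ensure the recursion never touches $x_{i,k}$ again --- at which point you have reconstructed the paper's two-phase argument one $k$ at a time.
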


Lemma~\ref{BpoDerivationLm} implies
that if $\pi$ is the bipartite partial order
associated with a partial specification~$\tau$ of
a partial order, then the derivation~$P_\pi$ does
not resolve on any literal whose value is set
by~$\tau$.  This is proved by noting that if $i\prec_\tau j$,
then $j \notin M_\pi$.

Note that if $\pi$ is empty,
$M_\pi=[n]$ and there are no clauses
of type~($\gamma$).
In this case, $\GT_{\pi,n}$ is identical to~$\GT_n$,
and $P_\pi$ is the same as the refutation of~$\GT_n$
of Theorem~\ref{GtProofsThm}.

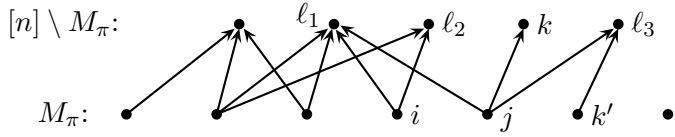
\begin{figure}[t]
\begin{center}
\psset{unit=0.06cm}     % Typesetter: Scale picture size with this parameter
\begin{pspicture}(-30,0)(140,20)
\pscircle*(0,0){2pt}
\pscircle*(20,0){2pt}
\pscircle*(40,0){2pt}
\pscircle*(60,0){2pt}
\pscircle*(80,0){2pt}
\pscircle*(100,0){2pt}
\pscircle*(120,0){2pt}
\rput(-14,0){$M_\pi$:}
\rput(-14,20){$[n]\setminus M_\pi$:}
\pscircle*(25,20){2pt}
\pscircle*(46,20){2pt}
\pscircle*(67,20){2pt}
\pscircle*(88,20){2pt}
\pscircle*(109,20){2pt}
\psline[arrowscale=1.4 1.2]{->}(0,0)(24,19)
\psline[arrowscale=1.4 1.2]{->}(20,0)(25,19)
\psline[arrowscale=1.4 1.2]{->}(40,0)(26,19)
\psline[arrowscale=1.4 1.2]{->}(20,0)(44.7,19)
\psline[arrowscale=1.4 1.2]{->}(40,0)(45.6,19)
\psline[arrowscale=1.4 1.2]{->}(60,0)(46.4,18.8)
\psline[arrowscale=1.4 1.2]{->}(80,0)(47.4,19.7)
\psline[arrowscale=1.4 1.2]{->}(20,0)(66.0,19)
\psline[arrowscale=1.4 1.2]{->}(60,0)(67,19)
\psline[arrowscale=1.4 1.2]{->}(80,0)(88,19)
\psline[arrowscale=1.4 1.2]{->}(80,0)(108,19)
\psline[arrowscale=1.4 1.2]{->}(100,0)(109,19)
\uput[0](60,0){$i$}
\uput[-10](80,0){$j$}
\uput[0](88,20){$k$}
\uput[0](100,0){$k^\prime$}
\uput[160](46,20){$\ell_1$}
\uput[0](67,20){$\ell_2$}
\uput[0](109,20){$\ell_3$}
\end{pspicture}
\end{center}
\caption{A bipartite partial order~$\pi$ is pictured, with the
ordered pairs of~$\pi$ shown as directed
edges.  (For instance, $j \prec_\pi k$ holds.)
The set $M_\pi$ is the set of minimal vertices.
The nodes $i,j,k$ shown are an example of nodes used
for a transitivity axiom
$\overline x_{i,j} \lor \overline x_{j,k} \lor \overline x_{k,i}$
of type~($\gamma$).  The nodes
$i,j,k^\prime$ are an example of the nodes
for a transitivity axiom of
type~($\beta$).}
\label{BipartiteFig}
\end{figure}

\begin{proof}
By renumbering the vertices, we can assume w.l.o.g.\ that
$M_\pi = \{0,\ldots,m{-}1\}$.
For each $k\ge m$, there is
at least one value of~$j$ such that $j \prec_\pi k$:
let $J_k$ be an arbitrary such value~$j$.  Note $J_k<m$.

Fix $i\in M_\pi$; that is, $i<m$.
Recall that the clause of type~($\alpha$) in $\GT_{\pi,n}$ for~$i$ is
$\bigvee_{j \not= i} x_{j,i}$.  We resolve this
clause successively, for each $k\ge m$ such that $i\not\prec_\pi k$,
against the clauses~$T_{i,J_k,k}$ of type~($\gamma$)
\[
     \overline x_{i,J_k}
\lor
     \overline x_{J_k,k}
\lor
     \overline x_{k,i}
\]
using resolution variables $x_{k,i}$.
(Note that $J_k\not= i$ since $i\not\prec_\pi k$.)
This yields a clause~$T^\prime_{i,m}$:
\[
\bigvee_{k\ge m \atop i \not\prec_\pi k} \overline x_{i,J_k}
~\lor~
\bigvee_{k\ge m \atop i \not\prec_\pi k} {\overline x_{J_k, k}}
~\lor~
\bigvee_{k\ge m \atop i \prec_\pi k}{x_{k,i}}
~\lor~
\bigvee_{k<m \atop k\not = i} x_{k,i}.
\]
The first two disjuncts shown above for~$T^\prime_{i,m}$
come from the side literals of the clauses~$T_{i,J_k,k}$;
the last two disjuncts come
from the literals in $\bigvee_{j \not= i} x_{j,i}$
which were not resolved on.
Since a literal $\overline x_{i,J_k}$ is the
same literal as $x_{J_k,i}$ and since $J_k<m$,
the literals in the first disjunct are
also contained in the fourth disjunct.
Thus, eliminating duplicate literals, $T^\prime_{i,m}$ is
equal to the clause
\[
\bigvee_{k\ge m \atop i \not\prec_\pi k} {\overline x_{J_k, k}}
~\lor~
\bigvee_{k\ge m \atop i \prec_\pi k}{x_{k,i}}
~\lor~
\bigvee_{k<m \atop k\not = i} x_{k,i}.
\]

Repeating this process, we obtain
derivations of the clauses~$T^\prime_{i,m}$ for all $i<m$.
The final disjuncts of these clauses,
$\bigvee_{i\not= k<m} x_{k,i}$,
are the same as the ($\alpha_\emptyset$) clauses in $\GT_m$.
Thus, the clauses~$T^\prime_{i,m}$ give
all ($\alpha_\emptyset$) clauses of~$\GT_m$, but with literals
$\overline x_{J_k, k}$ and~$x_{k,i}$ added in as side literals.
Moreover, the
clauses of type~($\beta$) in $\GT_{\pi,n}$
are exactly the transitivity clauses
of $\GT_m$.   All these clauses can be combined
exactly as in the refutation of~$\GT_m$
described in Theorem~\ref{GtProofsThm}, but carrying along
extra side literals $\overline x_{J_k,k}$ and~$x_{k,i}$,
or equivalently carrying along
literals~$\overline x_{J_k,k}$ for $J_k \prec_\pi k$,
and~$\overline x_{i,k}$ for $i \prec_\pi k$.
Since the refutation of~$\GT_m$ uses all of its transitivity
clauses and since each $\overline x_{J_k,k}$ literal is also
one of the $\overline x_{i,k}$'s,
this yields a resolution derivation~$P_\pi$
of the clause
\[
\{  \overline x_{i,k} : \hbox{$i\prec_\pi k$} \}.
\]
This is the clause $\falseBPOpi$ as desired.

Finally, we observe that $P_\pi$ is regular.  To show this,
note that the first parts of $P_\pi$ deriving the clauses~$T^\prime_{i,m}$
are regular by construction,
and they use resolution only on variables~$x_{k,i}$
with $k\ge m$, $i<m$, and $i\not\prec_\pi k$.
The remaining part of~$P_\pi$ is also regular by Theorem~\ref{GtProofsThm},
and uses resolution only on variables $x_{i,j}$ with $i,j\le m$.
\hfill $\qed$
\end{proof}

\begin{proof} of Theorem~\ref{PoolResGgtThm}.
We will show how to construct a series of ``LR partial refutations'',
denoted $R_0$, $R_1$, $R_2, \ldots$; this process
eventually terminates
with a pool resolution refutation of~$\GGT_n$.
The terminology ``LR partial'' indicates that
the refutation is being constructed in left-to-right order, with
the left part of the refutation properly formed, but with
many of the remaining leaves being labeled with bipartite partial orders
instead of with valid learned clauses or initial clauses from~$\GGT_n$.
We first describe the construction of the pool refutation, and
leave the size analysis to the end.

An LR partial refutation~$R$ is a tree with nodes
labeled with clauses that form a correct pool resolution proof,
except possibly at the leaves (the initial clauses).
Furthermore, it must satisfy the following conditions.

\begin{description}
\item[\rm a.]
$R$~is a tree. The root is labeled with the
empty clause. Each non-leaf node in~$R$ has a left child and right
child; the clause labeling the node is derived by resolution from the clauses on its two children.
\item[\rm b.] For each clause~$C$ occurring in~$R$, the
clause $C^+$ and the set of ordered
pairs $\tau(C)$ are defined by
\begin{eqnarray*}
C^+ & := & \{\overline x_{i,j} : \hbox{$\overline x_{i,j}$ is
occurs in some clause on the branch} \\
& & \quad \quad \quad \quad \quad \quad \quad \quad \quad \quad \quad
\hbox{from the root node~$R$ to~$C$}\},
\end{eqnarray*}
and $\tau(C) = \{ \langle i,j \rangle : \overline x_{i,j}\in C^+\}$.
Note that $C\subseteq C^+$ holds by definition.
In many cases,
$\tau(C)$ will be a partial specification of a partial order,
but this is not always true.  For instance, if $C$ is
a transitivity axiom, then $\tau(C)$ has a 3-cycle
and is not consistent
as a specification of a partial order.

\item[\rm d.] Each finished leaf~$L$ is
labeled with either a clause
from $\GGT_n$ or a clause that occurs to the left of~$L$
in the postorder traversal of~$R$.
\item[\rm e.] For an unfinished leaf labeled with clause~$C$,
the set $\tau(C)$ is
a partial specification of a partial order.
Furthermore, letting
$\pi$ be the bipartite
partial order associated with $\tau(C)$,
the clause $C$ is equal to~$\falseBPOpi$.
\end{description}

Property e.\ is particularly crucial and is novel to
our construction.  As shown below, each
unfinished leaf, labeled with a clause $C = \falseBPOpi$,
will be replaced by a derivation~$S$.  The derivation~$S$ often will
be based on~$P_\pi$, and thus might be expected to end with
exactly the clause~$C$; however, some of the resolution inferences
needed for $P_\pi$ might be disallowed by the pool property.
This can mean that $S$ will instead be
a derivation of a clause~$C^\prime$ such that
$C\subseteq C^\prime \subseteq C^+$.
The condition $C^\prime\subseteq C^+$ is
required because any literal $x \in C^\prime \setminus C$
will be handled by modifying the refutation~$R$
by propagating $x$ downward in~$R$
until reaching a clause that already contains~$x$.
The condition $C^\prime\subseteq C^+$ ensures that such
a clause exists.
The fact that $C^\prime\supseteq C$
will mean that enough literals are present
for the derivation to
use only (non-degenerate) resolution
inferences --- by virtue of the fact that our
constructions will pick~$C$ so that it contains
the literals that must be present for use as
resolution literals.  The extra literals
in $C^\prime \setminus C$ will be handled by propagating them down
the proof to where they are resolved on.

The construction begins by letting $R_0$ be the ``empty'' refutation,
containing just the empty clause.  Of course, this clause is
an unfinished leaf, and $\tau(\emptyset) = \emptyset$.  Thus
$R_0$ is a valid LR partial refutation.

For the induction step, $R_i$ has been constructed already.
Let $C$ be the leftmost unfinished clause in~$R_i$.
$R_{i+1}$ will be formed by replacing~$C$
by a refutation~$S$ of some clause
$C^\prime$ such that $C\subseteq C^\prime \subseteq C^+$.

We need to describe the (LR partial) refutation~$S$.  Let $\pi$ be the
bipartite partial order associated with $\tau(C)$, and consider
the derivation~$P_\pi$ from Lemma~\ref{BpoDerivationLm}.
Since $C$ is $\falseBPOpi$ by condition~e., the final line of~$P_\pi$
is the clause~$C$.  The intuition is that we would like
to let $S$ be~$P_\pi$.  The first difficulty with this is that
$P_\pi$ is dag-like, and the $LR$-refutation is intended to be
tree-like,  This difficulty, however, can be circumvented by just
expanding $P_\pi$, which is regular,
into a tree-like regular derivation with lemmas by the simple expedient of
using a depth-first traversal of~$P_\pi$.
The second, and more serious, difficulty is that
$P_\pi$ is
a derivation from~$\GT_n$, not~$\GGT_n$.  Namely,
the derivation~$P_\pi$
uses the transitivity clauses of~$\GT_n$ instead of the guarded
transitivity clauses of~$\GGT_n$.
The transitivity clauses $T_{i,j,k} :=
\overline x_{i,j}\lor \overline x_{j,k} \lor \overline x_{k,i}$
in~$P_\pi$ are handled one at a time as described below.
We will use four separate constructions:
in case~\caseiti{}, no change to~$P_\pi$ is required;
cases \caseitii{} and~\caseitiii{} require small changes;
and in the fourth case, the subproof~$P_\pi$
is abandoned in favor of ``learning'' the transitivity clause.

Before doing the four constructions, it is worth noting
that Lemma~\ref{BpoDerivationLm} implies
that no literal in~$C^+$ is used as a resolution
literal in~$P_\pi$.  To prove this, suppose $x_{i,j}$
is a resolution variable in~$P_\pi$.
Then, from Lemma~\ref{BpoDerivationLm} we have that at least one
of $i$ and~$j$ is $\pi$-minimal and that $i\not\prec_\pi j$ and
$j\not\prec_\pi i$.
Thus $i\not\prec_{\tau(C)} j$ and $j\not\prec_{\tau(C)} i$,
so $\tau(C)$ contains neither
$x_{i,j}$ nor~$\overline x_{i,j}$.

By the remark made after
Lemma~\ref{BpoDerivationLm},
no literal in~$C^+$ is used as a resolution
literal in~$P_\pi$.

\begin{description}
\item[\caseiti] If an initial transivitivity clause
of~$P_\pi$ already appears earlier in~$R_i$ (that is, to the left
of~$C$), then it is already {\em learned}, and can be used freely
in~$P_\pi$.
\end{description}
In the remaining cases \caseitii{}-\caseitiv{},
the transitivity clause~$T_{i,j,k}$ is not yet learned.
Let the guard variable for~$T_{i,j,k}$
be $x_{r,s}$, so $r=r(i,j,k)$ and $s=s(i,j,k)$.
\begin{description}
\item[\caseitii] Suppose case~\caseiti{} does not
apply and
that the guard variable~$x_{r,s}$ or its negation $\overline x_{r,s}$
is a member of~$C^+$.
The guard variable thus is used as a resolution variable somewhere
along the branch from the root to clause~$C$.
Then, as just argued above,
Lemma~\ref{BpoDerivationLm} implies that $x_{r,s}$ is not resolved on
in~$P_\pi$.
Therefore,
we can add the literal $x_{r,s}$ or~$\overline x_{r,s}$ (respectively)
to the clause $T_{i,j,k}$ and to every clause on any path below~$T_{i,j,k}$
until reaching a clause that already contains that literal.
This replaces $T_{i,j,k}$ with one of the initial clauses
$T_{i,j,k}\lor x_{r,s}$ or
$T_{i,j,k}\lor \overline x_{r,s}$ of~$\GGT_n$.
By construction, it preserves the validity of the resolution
inferences of~$R_i$ as well as the regularity
property.
Note this adds the literal
$x_{r,s}$ or $\overline x_{r,s}$ to the final
clause~$C^\prime$ of the modified~$P_\pi$.
This maintains the property that
$C\subseteq C^\prime \subseteq C^+$.
\item[\caseitiii] Suppose case~\caseiti{} does not apply and that
$x_{r,s}$ is not used as a
resolution variable anywhere below~$T_{i,j,k}$ in~$P_\pi$ and
is not a member of $C^+$.  In this case, $P_\pi$
is modified so as to derive the clause~$T_{i,j,k}$
from the two $\GGT_n$ clauses
$T_{i,j,k}\lor x_{r,s}$ and
$T_{i,j,k}\lor \overline x_{r,s}$ by resolving on~$x_{r,s}$.
This maintains the regularity of the derivation.  It also means that
henceforth $T_{i,j,k}$ will be learned.
\end{description}
If all of the transitivity clauses in~$P_\pi$ can be handled by
cases \caseiti{}-\caseitiii{},
then we use $P_\pi$ to define~$R_{i+1}$.  Namely,
let $P_\pi^\prime$ be the derivation~$P_\pi$
as modified by the applications of cases \caseitii{}
and~\caseitiii{}.
The derivation~$P_\pi^\prime$ is regular and dag-like, so we
can recast it as a tree-like derivation~$S$ with lemmas, by using
a depth-first traversal of~$P^\prime_\pi$.
The size of~$S$ is linear in the size of~$P^\prime_\pi$, since only
input lemmas need to be repeated.  The final line of~$S$
is the clause~$C^\prime$, namely $C$ plus the literals
introduced by case~\caseitii{}.
The derivation $R_{i+1}$ is
formed from~$R_i$ by replacing the clause~$C$ with the derivation~$S$
of~$C^\prime$, and then propagating each new literal $x\in C^\prime\setminus C$
down towards the root of~$R_i$, adding~$x$ to each clause below~$S$
until reaching a clause that already contains~$x$.
The derivation~$S$ contains no unfinished leaf, so $R_{i+1}$ contains
one fewer unfinished leaves than~$R_i$.

On the other hand, if even one transitivity axiom~$T_{i,j,k}$ in~$P_\pi$ is
not covered by the above three cases, then case~\caseitiv{} must be
used instead.
This introduces
a completely different construction to form~$S$:
\begin{description}
\item[\caseitiv{}] Let $T_{i,j,k}$ be any transitivity axiom in~$P_\pi$
that is not covered by cases \caseiti{}-\caseitiii{}.
In this case, the guard variable~$x_{r,s}$ is used as a resolution
variable in~$P_\pi$ somewhere below~$T_{i,j,k}$; in general,
this means we cannot use resolution on~$x_{r,s}$ to derive $T_{i,j,k}$
while maintaining the desired pool property.  Hence, $P_\pi$~is no longer used,
and
we instead will form $S$ with a short left-branching path
that
``learns'' $T_{i,j,k}$.
This will generate two or three new unfinished leaf nodes.  Since unfinished
leaf nodes in a LR partial derivation must be labeled with clauses
from bipartite partial orders, it is also necessary to attach short
derivations to these
unfinished leaf nodes to make the unfinished leaf clauses
of~$S$ correspond correctly to
bipartite partial orders.
These unfinished leaf nodes are then kept in~$R_{i+1}$
to be handled at later stages.

There are separate constructions depending on whether
$T_{i,j,k}$ is a clause of type ($\beta$) or~($\gamma$);
details are given below.
\end{description}

First suppose $T_{i,j,k}$ is of type~($\gamma$), and thus
$\overline x_{j,k}$ appears in~$C$.
(Refer to Figure~\ref{BipartiteFig}.)
Let $x_{r,s}$ be the
guard variable for the transitivity axiom~$T_{i,j,k}$.
The derivation~$S$ will have the form
\label{gammaCaseEq}
\begin{prooftree}
\AxiomC{$\overline x_{i,j},\overline x_{j,k}, \overline x_{k,i}, x_{r,s}$}
\AxiomC{$\overline x_{i,j},\overline x_{j,k}, \overline x_{k,i}, \overline x_{r,s}$}
\BinaryInfC{$\overline x_{i,j},\overline x_{j,k}, \overline x_{k,i}$}
\AxiomC{\raisebox{7pt}{$S_1$}$\proofdots$}
\kernHyps{-1ex}
\noLine
\UnaryInfC{$\overline x_{i,j},\overline x_{i,k},\overline \pi_{-[jk;jR(i)]}$}
\BinaryInfC{$\overline x_{i,j},\overline x_{j,k},\overline \pi_{-[jk;jR(i)]}$}
\AxiomC{\raisebox{7pt}{$S_2$}$\proofdots$}
\kernHyps{-1ex}
\noLine
\UnaryInfC{$\overline x_{j,i},\overline x_{j,k},\overline \pi_{-[jk;iR(j)]}$}
\BinaryInfC{$\overline x_{j,k}, \overline \pi_{-[jk]}$}
\end{prooftree}
The notation $\overline \pi_{-[jk]}$ denotes the
disjunction of the negations of the literals in
$\pi$ omitting the literal~$\overline x_{j,k}$.
We write ``$iR(j)$'' to indicate literals $x_{i,\ell}$
such that $j\prec_\pi \ell$.  (The ``$R(j)$''
means ``range of~$j$''.)  Thus $\overline \pi_{-[jk;iR(j)]}$
denotes the clause containing the negations of the literals in~$\pi$,
omitting $\overline x_{j,k}$ and any literals $\overline x_{i,\ell}$ such
that $j\prec_\pi \ell$.  The clause $\overline \pi_{-[jk;jR(i)]}$
is defined similarly, and the notation extends to more complicated
situations in the obvious way.

The upper leftmost inference of~$S$ is a resolution inference on
the variable~$x_{r,s}$.  Since $T_{i,j,k}$ is not covered by
either case \caseiti{} or~\caseitii{}, the variable~$x_{r,s}$
does not appear in or below clause~$C$ in~$R_i$.  Thus,
this use of~$x_{r,s}$ as a resolution variable does not violate
regularity.  Furthermore, since $T_{i,j,k}$ is of type~($\gamma$),
we have
$i{\not\prec_{\tau(C)}} j$,
$j{\not\prec_{\tau(C)}} i$,
$i{\not\prec_{\tau(C)}} k$, and
$k{\not\prec_{\tau(C)}} i$.
Thus
the literals $x_{i,j}$ and~$x_{i,k}$ do not appear in or below~$C$, so
they also can be resolved on without violating regularity.

Let $C_1$ and~$C_2$ be the final clauses of $S_1$ and~$S_2$, and
let $C_1^-$ be the clause below~$C_1$ and above~$C$.  The set
$\tau(C_2)$
is obtained by adding $\langle j,i \rangle$ to $\tau(C)$,
and similarly $\tau(C_1^-)$ is $\tau(C)$ plus~$\langle i,j \rangle$.
Since $T_{i,j,k}$ is type~($\gamma$), we have $i,j\in M_\pi$.
Therefore,
since $\tau(C)$ is a partial specification of a partial order,
$\tau(C_2)$ and $\tau(C_1^-)$ are also both partial specifications
of partial orders.
%% \par Finally notice that $S_1$ and~$S_2$ are legitimate ``unfinished''
%% leaves, since they specify bipartite partial orders.
Let $\pi_2$ and~$\pi_1$ be the bipartite orders
associated with these two partial specifications (respectively).
We will form the subproof~$S_1$
so that it contains the clause $\falseBPOpisub 1$ as its only unfinished clause.
This will require adding inferences in~$S_1$ which
add and remove the appropriate literals.  The first step of this type
already occurs in going up from $C_1^-$ to~$C_1$ since this
has removed $\overline x_{j,k}$ and added $\overline x_{i,k}$,
reflecting the fact that $j$ is not $\pi_1$-minimal
and thus $x_{i,k}\in\pi_1$ but $x_{j,k}\notin \pi_1$.
Similarly, we will form~$S_2$ so that its only unfinished clause
is $\falseBPOpisub 2$.

We first describe the
subproof~$S_2$ of~$S$.  The situation is pictured in Figure~\ref{S2Fig},
which shows an extract from Figure~\ref{BipartiteFig}: the edges
shown in part~(a) of the figure correspond to the literals present in
the final line~$C_2$ of~$S_2$.  In particular, recall that the
literals $\overline x_{i,\ell}$ such that $j\prec_\pi \ell$ are
omitted from the last line of~$S_2$.  (Correspondingly, the edge
from $i$ to~$\ell_1$ is omitted from Figure~\ref{S2Fig}.)   The last
line~$C_2$ of~$S_2$ may not correspond to a bipartite partial order
as it may not partition $[n]$ into minimal and non-minimal elements;
thus, the last line of~$S_2$
may not qualify to be an unfinished node of~$R_{i+1}$.
(An example of this in Figure~\ref{S2Fig}(a) is
that $j\prec_{\tau(C_2)} i \prec_{\tau(C_2)}\ell_2$, corresponding to
$\overline x_{j,i}$ and~$\overline x_{i,\ell_2}$
being in the last line of~$S_2$.)
The bipartite partial
order~$\pi_2$ associated with~$\tau(C_2)$ is equal to
the bipartite partial order that agrees with~$\pi$ except that each
$i\prec_\pi \ell$ condition is replaced with
the condition $j\prec_{\pi_2} \ell$.  (This is represented in
Figure~\ref{S2Fig}(b) by the fact that
the edge from $i$ to~$\ell_2$ has been replaced by the edge
from $j$ to~$\ell_2$. Note that the vertex~$i$ is no longer a
minimal element of~$\pi_2$; that is, $i\notin M_{\pi_2}$.)
We wish to form~$S_2$ to be a regular derivation of
the clause $\overline x_{j,i},\overline\pi_{-[jk;iR(j)]}$ from
the clause $\falseBPOpisub 2$.

The subproof of~$S_2$ for replacing $\overline x_{i,\ell_2}$ in~$\overline \pi$
with $\overline x_{j,\ell_2}$ in~$\overline \pi_2$ is as follows, letting
$\overline \pi^*$ be $\overline\pi_{-[jk;iR(j);i\ell_2]}$.
\begin{equation}\label{S2FormEq}
\AxiomC{\raisebox{7pt}{$S^\prime_2$}$\proofdots$}
\kernHyps{-1ex}
\noLine
\UnaryInfC{$\overline x_{j,i},\overline x_{i,\ell_2},\overline x_{\ell_2,j}$}
\AxiomC{$\proofdots$ \raisebox{1ex}{\hbox to 0pt{rest of~$S_2$}}}
\noLine
\UnaryInfC{$\overline x_{j,k},\overline x_{j,\ell_2}, \overline x_{j,i}, \overline \pi^*$}
\BinaryInfC{$\overline x_{j,k},\overline x_{i,\ell_2},\overline x_{j,i},\overline \pi^*$}
\DisplayProof
\end{equation}
The part labeled ``rest of $S_2$'' will handle similarly the
other literals~$\ell$ such that
$i\prec_\pi \ell$ and $j\not\prec_\pi \ell$.
The final line of $S_2^\prime$ is the transitivity axiom~$T_{j,i,\ell_2}$.
This is a $\GT_n$ axiom, not a $\GGT_n$ axiom; however, it can be
handled by the methods of cases \caseiti{}-\caseitiii{}.
Namely, if $T_{j,i,\ell_2}$ has already been learned by appearing somewhere
to the left in~$R_i$, then $S^\prime_2$~is just this single clause.
Otherwise, let the guard variable for $T_{j,i,\ell_2}$ be $x_{r',s'}$.
If $x_{r',s'}$ is used as a resolution variable below~$T_{j,i,\ell_2}$, then
replace $T_{j,i,\ell_2}$ with
$T_{j,i,\ell_2}\lor x_{r',s'}$ or $T_{j,i\ell_2}\lor \overline x_{r',s'}$,
and propagate the $x_{r',s'}$ or $\overline x_{r',s'}$ to clauses down
the branch leading to~$T_{j,i,\ell_2}$ until reaching a clause that
already contains that literal.
Finally, if $x_{r',s'}$ has not been
used as a resolution variable in~$R_i$ below~$C$, then let
$S^\prime_2$~consist of a resolution inference deriving (and learning)
$T_{j,i,\ell_2}$ from
the clauses
$T_{j,i,\ell_2},x_{r',s'}$ and
$T_{j,i,\ell_2},\overline x_{r',s'}$.

To complete the construction of~$S_2$,
the inference (\ref{S2FormEq}) is repeated for each value of~$\ell$
such that $i\prec_\pi\ell$ and $j\not\prec_\pi\ell$.
The result is that $S_2$ has one unfinished leaf clause, and it
is labelled with the clause $\falseBPOpisub 2$.

\begin{figure}[t]
\psset{unit=0.06cm}     % Typesetter: Scale picture size with this parameter
\hfill
\begin{pspicture}(40,-10)(110,20)
\pscircle*(60,0){2pt}
\pscircle*(80,0){2pt}
\pscircle*(46,20){2pt}
\pscircle*(67,20){2pt}
\pscircle*(88,20){2pt}
\pscircle*(109,20){2pt}
\psline[arrowscale=1.4 1.2]{->}(80,0)(47.4,19.7)
\psline[arrowscale=1.4 1.2]{->}(60,0)(67,19)
\psline[arrowscale=1.4 1.2]{->}(80,0)(88,19)
\psline[arrowscale=1.4 1.2]{->}(80,0)(108,19)
\psline[arrowscale=1.4 1.2]{->}(80,0)(61,0)
\uput[180](60,0){$i$}
\uput[-10](80,0){$j$}
\uput[0](88,20){$k$}
\uput[160](46,20){$\ell_1$}
\uput[0](67,20){$\ell_2$}
\uput[0](109,20){$\ell_3$}
\rput(70,-10){(a) $\overline x_{j,k},\overline x_{i,\ell_2},\overline x_{j,i},\overline \pi^*$}
\end{pspicture}
\hfill
\begin{pspicture}(40,-10)(110,20)
\pscircle*(60,0){2pt}
\pscircle*(80,0){2pt}
\pscircle*(46,20){2pt}
\pscircle*(67,20){2pt}
\pscircle*(88,20){2pt}
\pscircle*(109,20){2pt}
\psline[arrowscale=1.4 1.2]{->}(80,0)(47.4,19.7)
\psline[arrowscale=1.4 1.2]{->}(80,0)(67,19)
\psline[arrowscale=1.4 1.2]{->}(80,0)(88,19)
\psline[arrowscale=1.4 1.2]{->}(80,0)(108,19)
\psline[arrowscale=1.4 1.2]{->}(80,0)(61,0)
\uput[180](60,0){$i$}
\uput[-10](80,0){$j$}
\uput[0](88,20){$k$}
\uput[160](46,20){$\ell_1$}
\uput[0](67,20){$\ell_2$}
\uput[0](109,20){$\ell_3$}
\rput(70,-10){(b) $\overline x_{j,k},\overline x_{i,\ell_2},\overline x_{j,i},\overline \pi^*$}
\end{pspicture}
\hfill
\caption{The partial orders for the fragment of~$S_2$ shown
in~(\ref{S2FormEq}).}
\label{S2Fig}
\end{figure}

We next describe the subproof~$S_1$ of~$S$.
The situation is shown in Figure~\ref{S1Fig}.  As in the formation
of~$S_2$, the final clause~$C_1$ in~$S_1$ may need to be modified in order
to correspond to the bipartite partial order~$\pi_1$ which
is associated with~$\tau(C_1)$.  First,
note that the literal $\overline x_{j,k}$ is already
replaced by~$\overline x_{i,k}$ in the final clause of~$S_1$.
The other change that is needed is that, for every $\ell$ such that
$j\prec_\pi \ell$ and $i\not\prec_\pi \ell$, we must
replace $\overline x_{j,\ell}$ with $\overline x_{i,\ell}$
since we have
$j\not\prec_{\pi_1} \ell$ and $i\prec_{\pi_1} \ell$.  Vertex~$\ell_3$
in Figure~\ref{S1Fig} is an example of a such a value~$\ell$.  The
ordering in the final clause of~$S_1$ is shown in part~(a), and the
desired ordered pairs of~$\pi_1$ are shown in part~(b).
Note that $j$ is no longer a minimal element in~$\pi_1$.

The replacement of $\overline x_{j,\ell_3}$
with $\overline x_{i,\ell_3}$
is effected by the following inference, letting
$\overline \pi^*$ now be
$\overline \pi_{-[jk;jR(i);j\ell_3]}$.
\begin{equation}\label{S1FormEq}
\AxiomC{\raisebox{7pt}{$S^\prime_1$}$\proofdots$}
\kernHyps{-1ex}
\noLine
\UnaryInfC{$\overline x_{i,j},\overline x_{j,\ell_3},\overline x_{\ell_3,i}$}
\AxiomC{$\proofdots$ \raisebox{1ex}{\hbox to 0pt{rest of~$S_1$}}}
\noLine
\UnaryInfC{$\overline x_{i,k},\overline x_{i,\ell_3}, \overline x_{i,j}, \overline \pi^*$}
\BinaryInfC{$\overline x_{i,k},\overline x_{j,\ell_3},\overline x_{i,j},\overline \pi^*$}
\DisplayProof
\end{equation}
The ``rest of $S_1$'' will handle similarly the
other literals~$\ell$ such that
$j\prec_\pi \ell$ and $i\not\prec_\pi \ell$.
Note that the final clause of~$S_1^\prime$ is the
transitivity axiom $T_{i,j,\ell_3}$.  The subproof $S_1^\prime$ is
formed in exactly the same way that $S_2^\prime$ was formed above. Namely,
depending on the status of the guard variable~$x_{r',s'}$
for~$T_{i,j,\ell_3}$, one of the following
is done:
\caseiti{}~the clause~$T_{i,j,\ell_3}$ is already learned
and can be used as is,
or \caseitii{}~one of $x_{r',s'}$ or~$\overline x_{r',s'}$ is
added to the clause and propagated down the proof,
or \caseitiii{}~the clause~$T_{i,j,\ell_3}$ is inferred
using resolution on~$x_{r',s'}$
and becomes learned.

To complete the construction of~$S_1$,
the inference (\ref{S1FormEq}) is repeated for each value of~$\ell$
such that $j\prec_\pi\ell$ and $i\not\prec_\pi\ell$.
The result is that $S_1$ has one unfinished leaf clause, and it corresponds to
the bipartite partial order~$\pi_1$.

\begin{figure}[t]
\psset{unit=0.06cm}     % Typesetter: Scale picture size with this parameter
\hfill
\begin{pspicture}(40,-10)(110,23)
\pscircle*(60,0){2pt}
\pscircle*(80,0){2pt}
\pscircle*(46,20){2pt}
\pscircle*(67,20){2pt}
\pscircle*(88,20){2pt}
\pscircle*(109,20){2pt}
\psline[arrowscale=1.4 1.2]{->}(60,0)(46.4,18.8)
\psline[arrowscale=1.4 1.2]{->}(60,0)(67,19)
\psline[arrowscale=1.4 1.2]{->}(60,0)(79,0)
\psline[arrowscale=1.4 1.2]{->}(60,0)(87.5,19)
\psline[arrowscale=1.4 1.2]{->}(80,0)(108,19)
\uput[180](60,0){$i$}
\uput[-10](80,0){$j$}
\uput[0](88,20){$k$}
\uput[160](46,20){$\ell_1$}
\uput[0](67,20){$\ell_2$}
\uput[0](109,20){$\ell_3$}
\rput(70,-10){(a) $\overline x_{i,k},\overline x_{j,\ell_3},\overline x_{i,j},\overline \pi^*$}
\end{pspicture}
\hfill
\begin{pspicture}(40,-10)(120,23)
\pscircle*(60,0){2pt}
\pscircle*(80,0){2pt}
\pscircle*(46,20){2pt}
\pscircle*(67,20){2pt}
\pscircle*(88,20){2pt}
\pscircle*(109,20){2pt}
\psline[arrowscale=1.4 1.2]{->}(60,0)(46.4,18.8)
\psline[arrowscale=1.4 1.2]{->}(60,0)(67,19)
\psline[arrowscale=1.4 1.2]{->}(60,0)(87.5,19)
\psline[arrowscale=1.4 1.2]{->}(60,0)(107.5,19)
\psline[arrowscale=1.4 1.2]{->}(60,0)(79,0)
\uput[180](60,0){$i$}
\uput[-10](80,0){$j$}
\uput[0](88,20){$k$}
\uput[160](46,20){$\ell_1$}
\uput[0](67,20){$\ell_2$}
\uput[0](109,20){$\ell_3$}
\rput(70,-10){(b) $\overline x_{i,k},\overline x_{i,\ell_3},\overline x_{i,j},\overline \pi^*$}
\end{pspicture}
\hfill
\caption{The partial orders for the fragment of~$S_1$ shown
in~(\ref{S1FormEq}).}
\label{S1Fig}
\end{figure}

That completes the construction of the subproof~$S$ for the subcase
of~\caseitiv{} where $T_{i,j,k}$ is of type~($\gamma$).  Now suppose
$T_{i,j,k}$ is of type~($\beta$).  (For instance, the values $i,j,k^\prime$
of Figure~\ref{BipartiteFig}.)
In this case the derivation~$S$ will have the form
\label{betaCaseEq}
\begin{prooftree}
\AxiomC{$T_{i,j,k}, x_{r,s}$}
\AxiomC{$T_{i,j,k}, \overline x_{r,s}$}
\BinaryInfC{$T_{i,j,k}$}
\AxiomC{\raisebox{7pt}{$S_3$}$\proofdots$}
\kernHyps{-1ex}
\noLine
\UnaryInfC{$\overline x_{i,j},\overline x_{i,k},\overline \pi_{-[jR(i),kR(i\cup j)]}$}
\BinaryInfC{$\overline x_{i,j},\overline x_{j,k},\overline \pi_{-[jR(i),kR(i\cup j)]}$}
\AxiomC{\raisebox{7pt}{$S_4$}$\proofdots$}
\kernHyps{-1ex}
\noLine
\UnaryInfC{$\overline x_{i,j},\overline x_{k,j},\overline \pi_{-[jR(i\cap k)]}$}
\BinaryInfC{$\overline x_{i,j},\overline \pi_{-[jR(i\cap k)]}$}
\AxiomC{\raisebox{7pt}{$S_5$}$\proofdots$}
\kernHyps{-1ex}
\noLine
\UnaryInfC{$\overline x_{j,i},\overline \pi_{-[iR(j)]}$}
\BinaryInfC{$\overline \pi$}
\end{prooftree}
where $x_{r,s}$ is the guard variable for~$T_{i,j,k}$.
We write $[\overline \pi_{-[jR(i\cap k)]}]$ to mean
the negations of literals in~$\pi$
omitting any literal $\overline x_{j,\ell}$ such that
both $i\prec_\pi \ell$ and $k\prec_\pi\ell$.
Similarly, $\overline \pi_{-[jR(i),kR(i\cup j)]}$ indicates
the negations of literals in~$\pi$,
omitting the literals $\overline x_{j,\ell}$
such that $i\prec_\pi \ell$ and the literals
$\overline x_{k,\ell}$ such that either
$i\prec_\pi \ell$ or $j\prec_\pi \ell$.

Note that
the resolution on~$x_{r,s}$ used to derive $T_{i,j,k}$ does
not violate regularity, since otherwise $T_{i,j,k}$ would
have been covered by case~\caseitii{}.  Likewise,
the resolutions on $x_{i,j}$, $x_{i,k}$ and $x_{j,k}$ do not
violate regularity since $T_{i,j,k}$ is
of type~($\beta$).

%% On the other hand $S_3$, $S_4$ and $S_5$, are already valid ``unfinished''
%% leaves. Each one of the corresponds to a bipartite patial order.

The subproof~$S_5$
is formed exactly like the subproof~$S_2$ above, with the exception
that now the literal $\overline x_{j,k}$ is not present.  Thus we omit
the description of~$S_5$.

We next describe the construction of the
subproof $S_4$.  Let $C_4$ be the final clause of~$S_4$;
it is easy to check that $\tau(C_4)$ is a partial specification
of a partial order.
As before, we must derive~$C_4$
from the clause $\falseBPOpisub 4$ where $\pi_4$ is the bipartite
partial order associated with the partial order~$\tau(C_4)$.
A typical situation is shown in Figure~\ref{S4Fig}.  As pictured
there, it is necessary to add the
literals $\overline x_{i,\ell}$ such that $j\prec_\pi \ell$ and
$i\not\prec_\pi \ell$, while removing $\overline x_{j,\ell}$; examples
of this are $\ell$ equal to $\ell_2$ and~$\ell_3$
in Figure~\ref{S4Fig}.
At the same time, we must add the
literals $\overline x_{k,\ell}$ such that $j\prec_\pi \ell$ and
$k\not\prec_\pi \ell$, while removing $\overline x_{j,\ell}$; examples
of this are $\ell$ equal to $\ell_1$ and~$\ell_2$
in the same figure.

For a vertex~$\ell_3$ such that
$j\prec_\pi \ell_3$ and
$k\prec_\pi \ell_3$ but
$i\not\prec_\pi \ell_3$, this is done similarly to the inferences
(\ref{S2FormEq}) and~(\ref{S1FormEq})
but without the side literal~$\overline x_{j,k}$:
\begin{equation}\label{S4FormIonlyEq}
\AxiomC{\raisebox{7pt}{$S^\prime_4$}$\proofdots$}
\kernHyps{-1ex}
\noLine
\UnaryInfC{$\overline x_{i,j},\overline x_{j,\ell_3},\overline x_{\ell_3,i}$}
\AxiomC{$\proofdots$ \raisebox{1ex}{\hbox to 0pt{rest of~$S_4$}}}
\noLine
\UnaryInfC{$\overline x_{i,\ell_3},\overline x_{k,j}, \overline x_{i,j}, \overline \pi^*$}
\BinaryInfC{$\overline x_{j,\ell_3},\overline x_{k,j},\overline x_{i,j},\overline \pi^*$}
\DisplayProof
\end{equation}
Here $\overline \pi^*$ is
$\overline \pi_{-[jR(i\cap k);j\ell_3]}$.
The transitivity axiom~$T_{i,j,\ell_3}$ shown as the last line
of~$S_4^\prime$ is handled exactly as before.  This construction is
repeated for all such~$\ell_3$'s.

The vertices~$\ell_1$ such that
$j\prec_\pi \ell_1$ and
$i\prec_\pi \ell_1$ but
$k\not\prec_\pi \ell_1$ are handled in exactly the same way.
(The side literals of~$\pi^*$ change each time to reflect the
literals that have already been replaced.)

Finally, consider a vertex~$\ell_2$ such that
$i\not\prec_\pi \ell_2$ and
$j\prec_\pi \ell_2$ and
$k\not\prec_\pi \ell_2$.  This is handled by the derivation
\begin{prooftree}
\AxiomC{\raisebox{7pt}{$S^{\prime\prime}_4$}$\proofdots$}
\kernHyps{-1ex}
\noLine
\UnaryInfC{$\overline x_{i,j},\overline x_{j,\ell_2},\overline x_{\ell_2,i}$}
\AxiomC{\raisebox{7pt}{$S^{\prime\prime\prime}_4$}$\proofdots$}
\kernHyps{-1ex}
\noLine
\UnaryInfC{$\overline x_{k,j},\overline x_{j,\ell_2},\overline x_{\ell_2,k}$}
\AxiomC{$\proofdots$ \raisebox{1ex}{\hbox to 0pt{rest of~$S_4$}}}
\noLine
\UnaryInfC{$\overline x_{i,j},\overline x_{i,\ell_2},\overline x_{k,j},\overline x_{k,\ell_2},\overline \pi^*$}
\BinaryInfC{$\overline x_{i,j},\overline x_{i,\ell_2},\overline x_{k,j},\overline x_{j,\ell_2},\overline \pi^*$}
\BinaryInfC{$\overline x_{i,j},\overline x_{k,j},\overline x_{j,\ell_2},\overline \pi^*$}
\end{prooftree}
As before, the set~$\pi^*$ of side literals is changed to reflect
the literals that have already been added and removed as~$S_4$ is being
created.
The subproofs $S^{\prime\prime}_4$ and~$S^{\prime\prime\prime}_4$
of the transitivity axioms $T_{i,j,\ell_2}$ and~$T_{k,j,\ell_2}$
are handled exactly as before, depending on the status of their
guard variables.

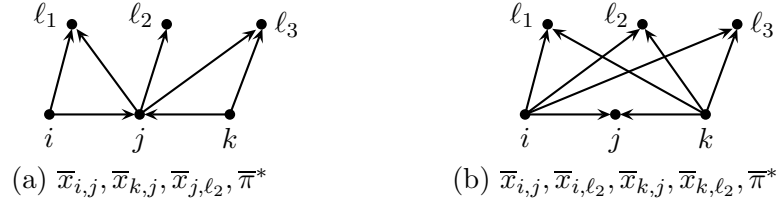
\begin{figure}[t]
\psset{unit=0.06cm}     % Typesetter: Scale picture size with this parameter
\hfill
\begin{pspicture}(16,-15)(72,24)
\pscircle*(20,0){2pt}
\pscircle*(40,0){2pt}
\pscircle*(60,0){2pt}
\pscircle*(25,20){2pt}
\pscircle*(46,20){2pt}
\pscircle*(67,20){2pt}
\psline[arrowscale=1.4 1.2]{->}(20,0)(25,19)
\psline[arrowscale=1.4 1.2]{->}(40,0)(26,19)
\psline[arrowscale=1.4 1.2]{->}(40,0)(46,19)
\psline[arrowscale=1.4 1.2]{->}(40,0)(65.8,18.8)
\psline[arrowscale=1.4 1.2]{->}(60,0)(67.2,19)
\psline[arrowscale=1.4 1.2]{->}(20,0)(39,0)
\psline[arrowscale=1.4 1.2]{->}(60,0)(41,0)
\uput[-90](20,0){$i$}
\uput[-90](40,0){$j$}
\uput[-90](60,0){$k$}
\uput[160](25,20){$\ell_1$}
\uput[160](46,20){$\ell_2$}
\uput[0](67,20){$\ell_3$}
\rput(40,-15){(a) $\overline x_{i,j},\overline x_{k,j},\overline x_{j,\ell_2},\overline \pi^*$}
\end{pspicture}
\hfill
\begin{pspicture}(16,-15)(72,24)
\pscircle*(20,0){2pt}
\pscircle*(40,0){2pt}
\pscircle*(60,0){2pt}
\pscircle*(25,20){2pt}
\pscircle*(46,20){2pt}
\pscircle*(67,20){2pt}
\psline[arrowscale=1.4 1.2]{->}(20,0)(25,19)
\psline[arrowscale=1.4 1.2]{->}(20,0)(45.5,19)
\psline[arrowscale=1.4 1.2]{->}(20,0)(66.0,19)
\psline[arrowscale=1.4 1.2]{->}(60,0)(26,19)
\psline[arrowscale=1.4 1.2]{->}(60,0)(46.5,19)
\psline[arrowscale=1.4 1.2]{->}(60,0)(67,19)
\psline[arrowscale=1.4 1.2]{->}(20,0)(39,0)
\psline[arrowscale=1.4 1.2]{->}(60,0)(41,0)
\uput[-90](20,0){$i$}
\uput[-90](40,0){$j$}
\uput[-90](60,0){$k$}
\uput[160](25,20){$\ell_1$}
\uput[160](46,20){$\ell_2$}
\uput[0](67,20){$\ell_3$}
\rput(40,-15){(b) $\overline x_{i,j},\overline x_{i,\ell_2},\overline x_{k,j},\overline x_{k,\ell_2},\overline \pi^*$}
\end{pspicture}
\hfill
\caption{The partial orders as changed by~$S_4$.}
\label{S4Fig}
\end{figure}

Finally, we describe how to form the subproof~$S_3$.  For this, we
must form the bipartite partial order~$\pi_3$ which associated with
the partial order~$\tau(C_3)$,
where $C_3$ is the final clause of~$S_3$.
To obtain $\overline\pi_3$, we need to add
the literals $\overline x_{i,\ell}$ such that
$i\not\prec_\pi \ell$ and such that
either $ j\prec_\pi \ell$ or $k\prec_\pi \ell$, while removing
any literals $\overline x_{j,\ell}$ and $\overline x_{k,\ell}$.
This is done by exactly the same
construction used above in~(\ref{S4FormIonlyEq}).  The literals
in $\overline \pi_{-[jR(i);kR(i\cup j)]}$ are exactly the literals needed
to carry this out.  The construction is quite similar to the above
constructions, and we omit any further description.

That completes the description of how to
construct the LR partial refutations~$R_{i}$.
The process stops once some~$R_i$ has no unfinished clauses.
We claim that the process stops after polynomially many stages.

To prove this, recall that $R_{i+1}$ is formed
by handling the leftmost unfinished clause using
one of cases \caseiti{}-\caseitiv{}.  In the first three cases,
the unfinished clause is replaced by a derivation based on $P_\pi$
for some bipartite order~$\pi$.  Since $P_\pi$ has size $O(n^3)$, this
means that the number of clauses in~$R_{i+1}$ is at most the
number of clauses in~$R_i$ plus $O(n^3)$.  Also, by construction, $R_{i+1}$~has
one fewer unfinished clauses than~$R_i$.  In case~\caseitiv{} however,
$R_{i+1}$ is formed by adding up to $O(n)$ many clauses to~$R_i$
plus adding either two or three new unfinished leaf clauses.  In addition,
case~\caseitiv{} always causes at least one transitivity axiom~$T_{i,j,k}$
to be learned.
Therefore, case~\caseitiv{} can occur at most $2{n \choose 3} = O(n^3)$ times.
Consequently at most $3\cdot 2 {n \choose 3} = O(n^3)$ many
unfinished clauses are added throughout the entire process.
It follows that the process stops with~$R_i$ having no unfinished clauses
for some $i\le 6 {n \choose 3}=O(n^3)$.  Therefore there is a pool
refutation of $\GGT_n$ with $O(n^6)$ lines.

By inspection, each clause in the refutation contains $O(n^2)$
literals.  This is because the largest clauses are those corresponding
to (small modifications of) bipartite partial orders, and because
bipartite partial orders can contain at most $O(n^2)$ many ordered pairs.
Furthermore, the refutations~$P_n$ for the graph tautology $\GT_n$ contain
only clauses of size $O(n^2)$.
\hfill \\
Q.E.D. Theorem~\ref{PoolResGgtThm} \hfill $\qed$
\end{proof}

Theorem~\ref{regRtiGgtThm} is proved with nearly the same construction.  In fact, the only change needed for
the proof is the construction of~$S$ from~$P_\pi^\prime$.
Recall that in the proof of Theorem~\ref{PoolResGgtThm}, the
pool derivation~$S$ was formed
by using a depth-first traversal of~$P$.  This is not
sufficient for
Theorem~\ref{regRtiGgtThm}, since now the derivation~$S$ must use only input
lemmas.  Instead, we use Theorem~3.3 of~\cite{BHJ:ResTreeLearning}, which
states that a (regular) dag-like resolution derivation can
be transformed into a (regular) tree-like derivation with input lemmas.
Forming $S$ in this way from~$P^\prime_\pi$ suffices for the
proof of Theorem~\ref{regRtiGgtThm}: the lemmas of~$S$ are either
transitive closure axioms derived earlier in~$R_i$ or are
derived by input subproofs earlier in the post-ordering of~$S$.
Since the transitive closure axioms that appeared
earlier in~$R_i$ were derived by resolving two $\GGT_n$ axioms,
the lemmas used in~$S$ are all input lemmas.

The transformation of Theorem~3.3 of~\cite{BHJ:ResTreeLearning}
may multiply the size of the derivation
by the depth of the original derivation.  Since
it is possible to form the proofs~$P_\pi$
with depth $O(n)$, the overall size of the
pool resolution refutations with input lemmas is $O(n^7)$.
This completes the proof of Theorem~\ref{regRtiGgtThm}.
\hfill $\qed$

\section{Greedy, unit-propagating DPLL with clause learning}\label{GreedySect}

This section discusses how the refutations in
Theorems \ref{PoolResGgtThm} and~\ref{regRtiGgtThm}
can be modified so as to ensure
that the refutations are greedy and unit-propagating.

\begin{definition}
Let $R$ be a
tree-like regular w-resolution refutation
with input lemmas.
For $C$ a clause in~$R$, let $C^+$ be the set
of literals which occur as literals or phantom literals
in clauses on the path from~$C$ to the root of~$R$.  (Recall that
``phantom literals'' are literals used for w-resolution
that are not actually present in the clauses.)
Also, let $\Gamma(C)$ be
the set of clauses of $\Gamma$
plus every clause $D <_R C$ in~$R$ that has
been derived by an input subproof
and thus is available as a learned
clause to aid in the derivation of~$C$.

The refutation $R$ is {\em greedy and unit-propagating}
provided
that, for each clause~$C$ of~$R$, if
there is an input derivation
from~$\Gamma(C)$ of some clause $C^\prime\subseteq C^+$
which does not resolve on any literal in~$C^+$,
then $C$~is derived in~$R$ by such a
derivation.
\end{definition}

Note that, as proved in~\cite{BKS:clauselearning},
the condition that there is a input derivation from $\Gamma(C)$ of
some $C^\prime \subseteq C^+$ which does not resolve
on $C^+$ literals
is equivalent to the condition that if all literals of $C^+$
are set false
then unit propagation yields a contradiction from $\Gamma(C)$.
(In \cite{BKS:clauselearning}, these are called ``trivial'' proofs.)
This justifies the terminology ``unit-propagating''.

The definition of ``greedy and unit-propagating''
is actually a bit more restrictive than
necessary, since DPLL algorithms may actually learn multiple
clauses at once, and this can mean that $C$ is not derived from
a single input proof but rather from a combination of several input proofs
as described in the proof of Theorem 5.1 in~\cite{BHJ:ResTreeLearning}.

\begin{theorem}\label{GreedyRegRtiThm}
The guarded graph tautology principles $\GGT_n$ have greedy, unit-propagating,
polynomial size, tree-like, regular w-resolution refutations with input lemmas.
\end{theorem}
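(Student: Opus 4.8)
The plan is to re-run the construction of the LR partial refutations $R_0,R_1,\ldots$ from the proof of Theorem~\ref{regRtiGgtThm}, but to interleave it with a ``greedy check'' performed each time a clause $C$ is added to the refutation under construction. The check is the one in the definition of greedy and unit-propagating: test whether, after setting every literal of $C^+$ to false, unit propagation from the currently available clauses $\Gamma(C)$ (the clauses of $\GGT_n$ together with the transitive-closure clauses $T_{i,j,k}$ already derived to the left of $C$ by input subproofs) derives a contradiction; by the result of~\cite{BKS:clauselearning} this is the same as asking for a trivial input derivation from $\Gamma(C)$ of some $C'\subseteq C^+$ that resolves on no literal of $C^+$. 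If the check succeeds, splice such a trivial derivation in above $C$, in place of the subproof the unmodified construction would build there, weaken its conclusion to $C$ using w-resolution with the literals of $C\setminus C'$ as phantom literals, and propagate the literals of $C'\setminus C\subseteq C^+$ down toward the root exactly as is already done in the proof of Theorem~\ref{PoolResGgtThm}. (The use of w-resolution --- rather than the plain resolution of Theorems~\ref{PoolResGgtThm} and~\ref{regRtiGgtThm} --- is precisely to absorb the discrepancy between the unit-propagation-generated clause and the clause the construction wants at that position.) If the check fails, proceed exactly as before, expanding $C$ via $P_\pi$ and cases~\caseiti{}--\caseitiv{}.

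First I would verify that the modified construction still yields a legal tree-like regular w-resolution refutation with input lemmas. Regularity survives because a spliced trivial derivation resolves only on variables not in $C^+$, hence on variables disjoint from the resolution variables used on the branch from $C$ to the root, and because downward literal propagation introduces no new inferences. The input-lemma property survives because a trivial derivation is itself an input derivation and each of its lemmas is either a $\GGT_n$ axiom or one of the transitivity clauses $T_{i,j,k}$, which were obtained earlier by a single resolution of two guarded-transitivity axioms of~$\GGT_n$.

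The main obstacle is establishing that the greedy, unit-propagating condition holds at \emph{every} clause of the final refutation: the condition is global, and patching one clause could a priori create a violation at another. Two points control this. The construction finalizes clauses in an order --- left to right across the $R_i$, and within a subproof outward from its root toward its leaves --- that agrees with how $C^+$ and $\Gamma(C)$ depend on the already-built part of the refutation, so a check made when $C$ is created remains correct afterwards, modulo exactly the downward-propagation bookkeeping already handled in the proof of Theorem~\ref{PoolResGgtThm}. And a spliced trivial derivation is itself locally greedy: for any clause $D$ occurring inside it, the initial segment of the derivation ending at $D$ is again a trivial input derivation from $\Gamma(D)$ that resolves on no literal of $D^+$ (using the regularity of the spliced derivation together with the fact that it avoids $C^+$, which is contained in the relevant literal sets), so no clause interior to a spliced trivial derivation ever needs further surgery. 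Hence the condition holds uniformly once the process terminates.

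Finally, the size analysis is essentially that of Theorem~\ref{PoolResGgtThm}. Each greedy splice inserts a trivial derivation of size polynomial in $n$ (there are only $O(n^2)$ variables) and does not increase the number of unfinished leaves still to be processed, while case~\caseitiv{} still learns a fresh transitivity axiom each time it fires, so at most $O(n^3)$ unfinished leaves are handled in all; the total number of clauses therefore stays polynomial and each clause still has $O(n^2)$ literals. The bulk of the write-up goes into the third paragraph above --- checking that interleaving greedy splices with the left-to-right construction never leaves behind a clause at which unit propagation would have short-circuited but was not used.
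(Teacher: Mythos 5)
Your overall strategy is the same as the paper's: rebuild the LR partial refutations while interleaving a unit-propagation check at each clause, splice in a trivial input derivation of some $C'\subseteq C^+$ whenever the check succeeds, and use w-resolution phantom literals plus the downward literal propagation already present in Theorem~\ref{PoolResGgtThm} to absorb the mismatch between $C'$ and $C$. (One small technical quibble: w-resolution cannot ``weaken $C'$ up to $C$''; the paper instead leaves $C'$ as is and deletes the literals of $C\setminus C'$ from the clauses below it, turning the inferences where they would have been resolved into w-resolution inferences with phantom literals. The intent is the same.)

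The genuine gap is in your size/termination argument, specifically the sentence asserting that ``case~\caseitiv{} still learns a fresh transitivity axiom each time it fires.'' In the greedy setting this is exactly what has to be \emph{proved}, because the unit-propagation splices are mandatory, not optional: if, while processing the subproof $S$ built in case~\caseitiv{}, the greedy check succeeded at the root of $S$ or at some clause below the clause $T_{i,j,k}$, you would be forced to short-circuit $S$ before $T_{i,j,k}$ is ever derived, so $T_{i,j,k}$ would not be learned, and the same axiom could trigger case~\caseitiv{} again later --- each firing spawning two or three new unfinished leaves --- destroying the $O(n^3)$ bound on the number of unfinished leaves and hence the polynomial bound. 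The paper closes this by arguing that $T_{i,j,k}$ is necessarily reached in the preorder scan of $S$: its guard literals are not in $C_0^+$, and any input-derivation shortcut available at or below $T_{i,j,k}$ in $S$ would already have been available at the corresponding point of the scan of $P_\pi^*$, contradicting the fact that that scan reached $T_{i,j,k}$ and triggered case~\caseitivp{}. A secondary, less serious divergence: you invoke Theorem~3.3 of~\cite{BHJ:ResTreeLearning} as a black box and patch its output, whereas the paper unwinds $P_\pi$ into a possibly exponentially large tree $P_\pi^*$ and lets the greedy condition itself do the pruning, showing each clause occurs at most $d_C$ times by induction on its depth; your route is plausible but you would still need to argue that the patched Theorem~3.3 output is greedy at the clauses where that transformation chose to re-derive a clause rather than cite it as a lemma.
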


\begin{proof}
We indicate how to modify the proofs
of Theorems \ref{PoolResGgtThm} and~\ref{regRtiGgtThm}.
We again build tree-like LR partial
refutations satisfying the same
properties a.-e.\ as before, except now w-resolution inferences
are permitted.  Instead of
being formed in distinct stages $R_0, R_1, R_2,\ldots$,
the w-resolution refutation~$R$ is constructed by one continuing process.
This construction incorporates all
of transformations \caseiti{}-\caseitiv{} and also incorporates
the construction of Theorem 3.3 of~\cite{BHJ:ResTreeLearning}.

At each point in the construction, we will be scanning
the so-far constructed partial w-resolution refutation~$R$
in preorder, namely in
depth-first, left-to-right order.  That is to say,
the construction recursively processes a node in the proof tree,
then its left subtree,
and then its right subtree.  During most steps of
the preorder scan,
the partial refutation~$R$ is modified by changing
the part that comes subsequently in the preorder,
but the construction may also add and remove literals
from clauses below the current clause~$C$.
When the preorder scan reaches a clause~$C$ that
has an input derivation~$R^\prime$ from $\Gamma(C)$
of some $C^\prime\subseteq C$
that does not
resolve on~$C^+$, then some such~$R^\prime$ is inserted
into~$R$ at that point.
When the preorder scan reaches an unfinished leaf~$C=C_0$,
then a
(possibly exponentially large) derivation~$P_\pi^*$ is added
as its derivation.  The construction continues processing~$R$
by scanning~$P_\pi^*$ in preorder, with the end result
that either (1)~$P_\pi^*$ is succcessfully processed and reduced
to only polynomial size or (2)~the preorder scan of~$P_\pi^*$
reaches a transitivity clause~$T_{i,j,k}$ of the type that
triggered case~\caseitiv{} of Theorem~\ref{PoolResGgtThm}.
In the latter case,
the preorder scan backs up to the root clause~$C_0$ of~$P_\pi^*$,
replaces $P_\pi^*$ with the
derivation~$S$ constructed in case~\caseitiv{}
of Theorem~\ref{PoolResGgtThm},
and restarts the preorder scan at clause~$C_0$.

We describe the actions of the preorder scan in more
detail. Initially, $R$ is the ``empty''
derivation, with the empty clause as its only
(unfinished) clause.  A clause~$C$ encountered during
the preorder scan of~$R$ is handled by one of
the following.

\begin{description}
\item[\caseitip] Suppose that some $C^\prime\subseteq C^+$
can be derived by an input
derivation from~$\Gamma(C)$ that does not resolve on any literals
of~$C^+$.
Fix any such~$C^\prime\subseteq C^+$,
and replace the subderivation
in~$R$ of the clause~$C$ with such a derivation
of~$C^\prime$ from $\Gamma(C)$.
Any extra literals in $C^\prime\setminus C$
are in $C^+$ and are propagated down
until reaching a clause where
they already appear, or occur as a phantom
literal.  There may also be
literals in $C \setminus C^\prime$: these
literals are removed as necessary from clauses
below~$C^\prime$ in~$R$ to maintain the property
of~$R$ containing correct w-resolution inferences.
Note that this can convert resolution inferences into
w-resolution inferences.

The clause $C^\prime$ is now a learned clause.
Note that this case includes
transitivity clauses $C=C^\prime=T_{i,j,k}$
that satisfy the conditions of cases \caseiti{}-\caseitiii{}
of Theorem~\ref{PoolResGgtThm}
\item[\caseitiip] If case~\caseitip{} does not apply,
and $C$ is not a leaf node, then $R$ is unchanged
at this point and the depth-first traversal proceeds
to the next clause.
\item[\caseitiiip] If $C$ is an unfinished clause
of the form $\falseBPOpi$, let $P_\pi$ be
as before.
Recall that no literal in~$C^+$ is resolved on in~$P_\pi$.
Unwind the proof~$P_\pi$
into a tree-like regular refutation~$P_\pi^*$
that is possibly exponentially big, and attach
$P_\pi^*$ to~$R$ as a proof of~$C$.  Mark the
position of~$C$ by setting $C_0=C$
in case it is necessary to
later backtrack to~$C$.  Then continue the preorder
scan by traversing into~$P_\pi^*$.

\item[\caseitivp] Otherwise, $C$ is an initial clause of
the form $T_{i,j,k}$ and since case~\caseitip{} does not apply,
one of $T_{i,j,k}$'s guard literals~$x$,
namely $x=x_{r,s}$ or $x=\overline x_{r,s}$,
is in~$C^+$.  If $C$ is {\em not} inside
the most recently added~$P_\pi^*$ or if $x\in C_0^+$,
then replace
$T_{i,j,k}$ with $T_{i,j,k}\lor x$,
and propagate the literal~$x$
downward in the refutation until reaching a
clause where it appears as a literal or a phantom literal.
Otherwise, the preorder scan backtracks to the
root clause~$C_0$ of $P_\pi^*$, and replaces $P_\pi^*$ with
the partial resolution refutation~$S$ formed in
case~\caseitiv{} of Theorem~\ref{PoolResGgtThm}.
\end{description}

It is clear that this process eventually halts with
a valid greedy, unit-propagating, tree-like w-resolution
refutation.  We claim that it also yields
a polynomial size refutation.  Consider
what happens when a derivation~$P_\pi^*$ is inserted.
If case~\caseitivp{} is triggered, then the proof~$S$ is
inserted in place of~$P_\pi^*$, so the size
of~$P_\pi^*$ does not matter.
If case~\caseitivp{} is not triggered,
then, as in the proof of Theorem 3.3 of~\cite{BHJ:ResTreeLearning},
the preorder scan of~$P_\pi^*$ modifies (the possibly
exponentially large) $P_\pi^*$ to
have polynomial size.
Indeed, as argued in~\cite{BHJ:ResTreeLearning},
any clause~$C$
in~$P_\pi^*$ will occur at most $d_C$~times in the modified
version of~$P_\pi^*$ where $d_C$ is the depth of the derivation
of~$C$ in the original~$P_\pi$.  This is because, $C$ will
have been learned by an input derivation once it has appeared no more than~$d_C$
times in the modified derivation~$P_\pi^*$.  This is
proved by induction on~$d_C$.

Consider the situation where $S$ has
just been inserted in place
of~$P_\pi^*$ in case~\caseitiiip{}.
The transitivity clause~$T_{i,j,k}$
is not yet learned at this point,
since otherwise case~\caseitip{} would have
applied.  We claim, however, that $T_{i,j,k}$
is learned as~$S$ is traversed.  To prove this,
since $T_{i,j,k}$ is manifestly derived
by an input derivation and since its guard
literals $x_{r,s}$ and~$\overline x_{r,s}$ do
not appear in~$C_0^+$, it is
enough to show that the clause~$T_{i,j,k}$ is
reached in the preorder traversal scan of~$S$.
This, however, is an immediate consequence of the fact that
$T_{i,j,k}$ was reached in the preorder scan of~$P_\pi^*$
and triggered case~\caseitivp{},
since if case~\caseitip{} applies to $T_{i,j,k}$
or to any clause below $T_{i,j,k}$
in the preorder scan of~$S$,
then it certainly also applies $T_{i,j,k}$ or
some clause below $T_{i,j,k}$
in the preorder scan of~$P_\pi^*$.

The size of the final refutation~$R$ is bounded
the same way as in the proof of Theorem~\ref{regRtiGgtThm},
and this completes the proof of Theorem~\ref{GreedyRegRtiThm}.
\hfill $\qed$
\end{proof}

\begin{theorem}\label{DPLLpolySizeThm}
There are DPLL search procedures with clause
learning which are greedy, unit-propagating, but do not use
restarts, that
refute the $\GGT_n$ clauses in polynomial time.
\end{theorem}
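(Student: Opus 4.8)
The plan is to extract an explicit DPLL clause-learning algorithm from the greedy, unit-propagating refutation~$R$ produced by Theorem~\ref{GreedyRegRtiThm}, using the correspondence between regWRTI proofs and DPLL searches with clause learning of~\cite{BHJ:ResTreeLearning}. The point is that a tree-like regular w-resolution refutation with input lemmas is essentially a transcript of a DPLL run: the postorder traversal of~$R$ gives the order in which the search explores its decision tree; an internal node labeled with a resolution literal~$x$ corresponds to branching on~$x$, with the two children exploring the two truth values; and each input lemma used at a leaf corresponds to a clause learned at an earlier point of the search. So the algorithm I would describe maintains a decision stack together with a set of learned clauses, and at the node currently being visited --- whose set of falsified literals is exactly the $C^+$ of that node --- it selects as its next decision variable the resolution variable labeling that node in~$R$.

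First I would spell out the main loop. Whenever the current partial assignment together with unit propagation over the clauses of~$\GGT_n$ and the currently learned clauses produces a conflict, the algorithm performs conflict analysis, learns the corresponding clause(s), and backtracks; by the remark after the definition of ``greedy and unit-propagating'', and by~\cite{BKS:clauselearning}, this is exactly the situation of case~\caseitip{} in the construction of Theorem~\ref{GreedyRegRtiThm}, where a clause with an input derivation from~$\Gamma(C)$ not resolving on~$C^+$ is spliced in. When no such conflict is present, the algorithm makes a decision, choosing the branching variable dictated by the current position in~$R$: if that position is an internal node of the already-built part of~$R$, use its resolution variable; if it is an unfinished leaf~$\falseBPOpi$, begin exploring the unwound subproof of~$P_\pi$ exactly as in case~\caseitiiip{}, and if this later forces case~\caseitivp{}, backtrack to the root of that subproof and restart its exploration with the learning-based subproof~$S$. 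In this way the run of the algorithm literally generates the refutation~$R$ of Theorem~\ref{GreedyRegRtiThm}, and it is greedy and unit-propagating by construction, since the loop always gives priority to detecting and resolving conflicts by unit propagation before making any new decision.

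Second I would bound the running time. The refutation~$R$ has polynomial size by Theorem~\ref{GreedyRegRtiThm} (the same bound as in Theorem~\ref{regRtiGgtThm}, up to the factor coming from the transformation of Theorem~3.3 of~\cite{BHJ:ResTreeLearning}), so the decision tree actually traversed has polynomially many nodes. Each individual step --- one round of unit propagation, one conflict-analysis and learning step, or the choice of one decision literal together with the bookkeeping for the associated bipartite partial order~$\pi$ and the derivation~$P_\pi$ --- is computable in time polynomial in~$n$: $\pi$ is obtained from the currently falsified literals by a transitive-closure computation, and $P_\pi$ is the explicit $O(n^3)$-size derivation of Lemma~\ref{BpoDerivationLm}. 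Hence the total running time is polynomial, which is what the theorem asserts.

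The point requiring the most care is the mismatch between the single clause recorded by a standard conflict-driven learning scheme at a conflict and the possibly several lemmas that~$R$ wants available as learned clauses at later points. I would handle this exactly as in the proof of Theorem~5.1 of~\cite{BHJ:ResTreeLearning}: a single DPLL conflict can be made to learn a bundle of clauses corresponding to a combination of input subproofs, and the remark after the definition of ``greedy and unit-propagating'' guarantees that whenever~$R$ needs a clause to have been learned already --- in particular a transitivity clause~$T_{i,j,k}$ used by a later~$P_\pi$, as analyzed in case~\caseitivp{} --- the earlier portion of the traversal has in fact passed through a conflict (or an input derivation) that learns it. Once this bookkeeping is in place, the explicit algorithm sketched above is the desired greedy, unit-propagating, restart-free, polynomial-time DPLL clause-learning procedure for~$\GGT_n$. \hfill $\qed$
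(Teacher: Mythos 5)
Your proposal is correct and takes essentially the same route as the paper: both extract an explicit greedy, unit-propagating, restart-free DPLL clause-learning algorithm from the construction of Theorem~\ref{GreedyRegRtiThm}, using the regWRTI/DPLL correspondence of~\cite{BHJ:ResTreeLearning} and handling the unfinished leaves by traversing~$P_\pi$ or falling back to the clause-learning subproof~$S$. The paper's sketch adds one simplifying observation worth keeping in mind --- the construction depends only on the transitivity clauses~$T_{i,j,k}$ being learned whenever possible, not on which other clauses happen to be learned --- which makes the algorithm description independent of the exact replay of~$R$.
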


We give a sketch of the
proof.
The construction for the proof of Theorem~\ref{GreedyRegRtiThm}
requires only that the clauses~$T_{i,j,k}$ are learned
whenever possible, and does not depend on whether any other
clauses are learned.
This means that the following
algorithm for DPLL search with clause learning
will always succeed in finding a refutation of
the $\GGT_n$ clauses: At each point, there is
a partial assignment~$\tau$.
The search algorithm must do one of the following:
\setlength{\parsep}{0in}
\setlength{\itemsep}{0in}
\begin{description}
\item[\rm (1)] If unit propagation yields a
contradiction, then learn a clause $T_{i,j,k}$ if
possible, and backtrack.
\item[\rm (2)] Otherwise, if there are
any literals in the transitive closure of
the bipartite partial order
associated with~$\tau$ which are not assigned
a value,
branch on one of these literals to set
its value.  (One of the true or false assignments
yields an immediate
conflict, and may allow learning a clause~$T_{i,j,k}$.)
\item[\rm (3)] Otherwise,
determine whether there is a clause~$T_{i,j,k}$
which is used in the proof~$P_\pi$ whose guard literals
are resolved on in~$P_\pi$.  (See Lemma~\ref{BpoDerivationLm}.)
If not, do a DPLL traversal of~$P_\pi$,
eventually backtracking from the assignment~$\tau$.
\item[\rm (4)] Otherwise, the clause
$T_{i,j,k}$ blocks
$P_\pi$ from being traversed in polynomial time.  Branch on
its variables in the order given
in the proof of Theorem~\ref{PoolResGgtThm}.
From this, learn the clause
$T_{i,j,k}$.
\end{description}

\bibliographystyle{siam}
\bibliography{logic,cstheory}

\end{document}